\newcommand{\Ex}{\mathbb{E}}
\renewcommand{\Pr}{\mathbb{P}}
\newcommand{\vect}[1]{\boldsymbol{#1}}
\newcommand{\set}[1]{\mathcal{#1}}
\newtheorem{lemma}{Lemma}
\newtheorem{theorem}{Theorem}
\newtheorem{corollary}{Corollary}
\newtheorem{remark}{Remark}
\newtheorem{definition}{Definition}
\newtheorem{fact}{Fact}
\author{\IEEEauthorblockN{Farzin~Haddadpour\IEEEauthorrefmark{1}, Mahdi~Jafari~Siavoshani\IEEEauthorrefmark{2}, Mayank Bakshi\IEEEauthorrefmark{2}, Sidharth Jaggi\IEEEauthorrefmark{2}}
\IEEEauthorblockA{\IEEEauthorrefmark{1}School of Electrical Engineering\\
Sharif University of Technology, Tehran, Iran\\
Email: \texttt{farzinhaddadpour@gmail.com}}
\IEEEauthorblockA{\IEEEauthorrefmark{2}Institute of Network Coding\\
Chinese University of Hong Kong, Hong Kong\\
Email: \texttt{\{mahdi,mayank\}@inc.cuhk.edu.hk}, \texttt{jaggi@ie.cuhk.edu.hk}}}
\begin{document}
\title{On AVCs with Quadratic Constraints}

\maketitle

\begin{abstract}
In this work we study an Arbitrarily Varying Channel
(AVC) with quadratic power constraints on the transmitter and a so-called ``oblivious'' jammer (along with additional AWGN)
under a {\it maximum probability of error} criterion, and no private randomness between the transmitter and the receiver. This is in contrast to similar AVC models under the {\it average probability of error} criterion considered in~\cite{CsisNar-IT91}, and models wherein common randomness is allowed~\cite{NarHugh-IT87} -- these distinctions are important in some communication scenarios outlined below.

We consider the regime where the jammer's power constraint is smaller than the transmitter's power constraint (in the other regime it is known no positive rate is possible). For this regime we show the existence of stochastic codes (with {\it no common randomness} between the transmitter and receiver) that enables reliable communication at the same rate as when the jammer is replaced with AWGN with the same power constraint. This matches known information-theoretic outer bounds. In addition to being a stronger result than that in~\cite{CsisNar-IT91} (enabling recovery of the results therein), our proof techniques are also somewhat more direct, and hence may be of independent interest.

%
\end{abstract}

\section{Introduction}\label{sec:Introduction}

Aerial Alice is flying in a surveillance plane high over Hostile Harry's territory. She wishes to relay her observations of Harry's troop movements back to Base-station Bob over $n$ channel uses of an AWGN channel with variance $\sigma^2$. Harry obviously wishes to jam Alice's transmissions. However, both Alice's transmission energy and Harry's jamming energy are constrained -- they have access to energy sources of $nP$ and $n\Lambda$ Joules respectively.\footnote{These are so-called {\it peak power} constraints -- they must hold for {\it all} codewords, rather than averaged over all codewords {\it average power} constraints. If the peak power constraints are relaxed to average power constraints, for either Alice's transmissions, or Harry's jamming (or both), it is known~\cite{NarHugh-IT87} that standard capacity results do not hold -- only ``$\lambda$-capacities'' exist.} Harry {already} knows {\it what} {\it message} Alice wants to transmit (after all, he knows the movements of his own troops), and also {\it roughly how} she'll transmit it ({\it i.e.}, her {\it communication protocol/code}, having recently captured another surveillance drone)
but he doesn't know {\it exactly how} she'll transmit it ({\it i.e.}, her {\it codeword} -- for instance, Alice could choose to focus her transmit power on some random subset of the $n$ channel uses). Further, since Alice's transmissions are very quick, Harry has no time to tune his jamming strategy to Alice's actual codeword -- he can only jam based on his prior knowledge of Alice's code, and her message.\footnote{Alternatively, Alice could split her energy budget to concurrently transmit one symbol on $n$ different frequencies -- these together could comprise her codeword. Given such a strategy, since Harry doesn't know Alice's codeword, he is unable to make his jamming strategy depend explicitly on the codeword Alice actually transmits.}

Even in such an adverse jamming setting we demonstrate that Alice can communicate with Bob at a rate equalling $\frac{1}{2}\log \left (1+\frac{P}{\Lambda+\sigma^{2}} \right )$ as long as $P>\Lambda$. Note that this equals the capacity of an AWGN with noise parameter equal to $\Lambda+\sigma^{2}$ -- this means that no ``smarter'' jamming strategy exists for Harry than simply behaving like AWGN with variance $\Lambda$. If $P< \Lambda$ no positive rate is possible since Harry can ``spoof'' by transmitting a fake message using the same strategy as Alice -- Bob is unable to distinguish between the real and fake transmissions\footnote{Such a jamming strategy is equivalent to the more general {\it symmetrizability} condition in the AVC literature (see, for instance~\cite{CsiN:88, CsiN:88a}, and \cite{LapNar-IT98}).}.

\subsection{Relationship with prior work}
The model considered in this work is essentially a special type of Arbitrarily Varying Channel (AVC) for which, to the best of our knowledge, the capacity has not been characterized before in the literature.
The notion of AVCs was first introduced by Blackwell \emph{et al.}~\cite{BlaBT:59,BlBrTho-AnMatStat60}, to capture communication models wherein channel have unknown parameters that may
vary arbitrarily during the transmission of a codeword. The case when both the transmitter and the jammer operate under constraints (analogous to the quadratic constraints in this work) has also been considered~\cite{CsiN:88, CsiN:88a}.
For an 
extensive survey on AVCs the reader may refer to the excellent survey~\cite{LapNar-IT98} 
and the references therein.

The class of AVCs over discrete alphabets has been studied in great detail in the 
literature~\cite{LapNar-IT98}. However, less is known about AVCs with continuous alphabets. The bulk of the work on continuous alphabet AVCs (outlined below in this section) focuses on quadratically-constrained AVCs. This is also the focus of our work.

It is important to stress several features of the model considered in this work, and the differences with prior work:
\begin{itemize}
\item {\underline {\it Stochastic encoding:}} To generate her codeword from her message, Alice is allowed to use private randomness (known only to her {\it a priori}, but not to Harry {\it or} Bob.
This is in contrast to the {\it deterministic encoding} strategies often considered in the information theory/coding theory literature, wherein the codeword is a deterministic function of the message.
\item {\underline {\it Public code:}} Everything Bob knows about Alice's transmission {\it a priori}, Harry also knows.\footnote{This requirement is an analogue for communication of  Kerckhoffs' Principle~\cite{Ker:83} in cryptography, which states that in a secure system, everything about the system is public knowledge, except possibly Alice's {\it private} randomness.} This is in contrast to the {\it randomized encoding} model also considered in the literature (see for instance~\cite{NarHugh-IT87, AgaSM:06}), in which it is critical that Alice and Bob share {\it common randomness} that is unknown to Harry.
\item {\underline {\it Message-aware jamming:}} The jammer is already aware of Alice's message. This is one important difference in our model, from the model in the work closest to ours, that of~\cite{CsisNar-IT91}.
\item {\underline {\it Oblivious adversary:} The jammer has no extra knowledge of the codeword being transmitted than what he has already gleaned from his knowledge of Alice's code and her message. This is in contrast to the {\it omniscient adversary} often considered in the coding theory literature.}
\end{itemize}

These model assumptions are equivalent to requiring public 
stochastic codes with small maximum error of probability against 
an oblivious adversary. Several papers also operate under {\it some} 
of these assumptions, but as far as we know, none examines 
the scenario where {\it all} these constraints are active.

The literature on {\it sphere packing} focuses on an AVC model
 wherein zero-error probability of decoding is required (or, equivalently, 
 when the probability (over Alice's codeword and Harry's jamming actions) 
 of Bob's decoding error is required to equal zero). Inner and 
 outer bounds were obtained by Blachman~\cite{Bla:62, Bla:62a}.
Like several other zero-error communication problems (including 
Shannon's classic work~\cite{Sha:56}) characterization of the 
optimal throughput possible is challenging, and in general still 
an open problem.\footnote{The literature on 
{\it Spherical Codes} (see~\cite{Wyn:67},~\cite{HamZ:97},~and~\cite{HamZ:97a} 
for some relatively recent work) looks at the related problem of 
packing unit hyperspheres on the {\it surface} of a hypershere. 
This corresponds to design of codes where each codeword meets 
the quadratic power constraint with equality, rather than allowing 
for an inequality.}

Other related models include: 
\begin{itemize}
\item The {\it vector Gaussian AVC}~\cite{HugN:88a}. As in the ``usual'' vector Gaussian channels, optimal code designs require ``waterfilling''.
\item The {\it per-sequence/universal} coding schemes in~\cite{LomF:11}. 
\item The {\it correlated/myopic} jammers in~\cite{Med:97,Sar:12}, wherein jammers obtain a noisy version of Alice's transmission and base their jamming strategy on this. 
\item The {\it joint source-channel coding, and coding with feedback} models considered by Ba\c{s}ar~\cite{Bas:83, Bas:89}.
\item Several other AVC variants, including {\it dirty paper coding}, in~\cite{SarG:12}.
\end{itemize}

We summarize some of the results mentioned above in Table~\ref{tab:ComparisonAVC}.


\begin{table*}
\centering
\begin{tabular}{|c|c|c|}
\hline
 & \textbf{Error Criterion} & \textbf{Capacity}\\
\hline
Blachman \cite{Bla:62} & $\sup_{\vect{s}} \sup_{\psi} \left[ \phi(\psi(i)+\vect{s}+\vect{V}) \neq i \right] \leq \epsilon$ & upper and lower bounds for the capacity\\
\hline
Hughes \& Narayan \cite{NarHugh-IT87} &  $\sup_{\vect{s}} \max_{i} \Pr_{(\Phi,\Psi),V} \left[ \Phi(\Psi(i) + \vect{s} + \vect{V}) \neq i\right] \le \epsilon$ &  $C=\frac{1}{2}\log(1+\frac{P}{\Lambda+\sigma^{2}})$\\
\hline
Csiszar \& Narayan \cite{CsisNar-IT91} &  $\sup_{\vect{s}} \frac{1}{M} \sum_{i=1}^M \Pr_{V} \left[ \phi(\psi(i) + \vect{s} +\vect{V}) \neq i \right] \le \epsilon$  & $C=\left\{ \begin{array}{ll}
\frac{1}{2}\log(1+\frac{P}{\Lambda+\sigma^{2}}) & \text{if} \: P>\Lambda\\
0 & \text{Otherwise}.
\end{array}\right.$\\
\hline
Our Setup & $\sup_{\vect{s}} \max_{i} \Pr_{\Psi,V} \left[ \phi(\Psi(i) + \vect{s} + \vect{V}) \neq i \right] \leq \epsilon$ &  $C=\left\{ \begin{array}{ll}
\frac{1}{2}\log(1+\frac{P}{\Lambda+\sigma^{2}}) & \text{if} \: P>\Lambda\\
0 & \text{Otherwise}.
\end{array}\right.$\\
\hline
\end{tabular}
\caption{Comparison of existing results on Quadratic-constrained AVCs with AWGN.}
\label{tab:ComparisonAVC}
\end{table*}


\section{Notation and Problem Statement}\label{sec:ProbStatementAVC}

\subsection{Notation}
Throughout the paper, we use capital letters to denote random variables 
and random vectors, and corresponding lower-case letters to 
denote their realizations. Moreover, bold letters
are reserved for vectors and calligraphic symbols denote sets. Random sets
are represented by an extra star as superscripts. 
Some constants are also denoted by capital letters.
Our convention is summarized in Table~\ref{tab:NotationConvention}.
\begin{table}
\centering
\begin{tabular}[h]{|c|c|c|c|}
\hline
 & Deterministic & Random & Realization\\
\hline
Scalar & $a,k,N$ & $X,Y$ & $x,y$\\
\hline
Vector & $\vect{v}$ & $\vect{V}$ & $\vect{v}$\\
\hline
Set & $\mathcal{C}$ & $\mathcal{C}^*$ & $\mathcal{C}$ \\
\hline
\end{tabular}
\caption{Examples of our notation convention for different variables.}
\label{tab:NotationConvention}
\end{table}

We use $N(a,\sigma^2)$ to denote for a Gaussian random variable
with mean $a$ and variance $\sigma^2$. To denote a ball in an $n$-dimensional 
real space of radius $r$ which centered at the point $\vect{c}\in\mathbb{R}^n$, 
we write $\mathcal{B}_n(\vect{c},r)$. In Table~\ref{tab:NotationSummary}, we summarize
the notation used in this paper.
\begin{table}
\centering
\begin{tabular}[h]{|l|p{195pt}|}
\hline
 \textbf{Symbol} & \textbf{Meaning}\\
\hline
$\Psi(i)$ & Stochastic encoder applied to the message $i$\\
\hline
$\phi(Y)$ & Deterministic decoder\\
\hline
$e(\vect{s},i)$ & Error probability (over the stochastic encoder and the channel noise) for a fixed message $i$ and jamming vector $\vect{s}$\\
\hline
$e_{\max}(\vect{s})$ & Maximum (over messages) error probability  for a fixed jamming vector $\vect{s}$\\
\hline
$N(a,\sigma^2)$ & Gaussian random variable with mean $a$ and variance $\sigma^2$\\
\hline
$\mathcal{B}_n(\vect{c},r)$ & A ball of radius $r$ in $\mathbb{R}^n$ which centered at $\vect{c}\in\mathbb{R}^n$\\
\hline
\end{tabular}
\caption{Commonly used symbols.}
\label{tab:NotationSummary}
\end{table}

\subsection{Problem Statement}\label{sec:ProbStatement-ScalarAVC}
In this paper we study the capacity of a quadratic constrained AVC
with stochastic encoder under the attack of a malicious adversary
who knows the transmitted message but is oblivious to the
actual transmitted codewords. 

Let the input and output of the channel are denoted by the random 
variables $X$ and $Y$ where $X,Y\in\mathbb{R}$. 
Then, formally, the channel is defined as follows
\begin{equation}\label{eq:ChannelSetup}
Y=X+S+V,
\end{equation}
where $S\in\mathbb{R}$ is the channel state chosen by a malicious 
adversary and $V\sim N(0,\sigma^2)$ is Gaussian random variable. 
Here we assume that the noise $V$ is independent
over different uses of channel \eqref{eq:ChannelSetup}.
The channel input is subjected to a peak power constraint 
as follows
\begin{equation}\label{eq:InputConstraint-AVC}
	\| \vect{x} \|^2 = \sum_{i=1}^n x_i^2 \le nP,
\end{equation}
and the permissible state sequences are those satisfying
\begin{equation}\label{eq:StateConstraint-AVC}
	\| \vect{s} \|^2 = \sum_{i=1}^n s_i^2 \le n\Lambda.
\end{equation}
The problem setup is depicted pictorially in Figure~\ref{fig:ProblemSetup-NoisyCase}.

\begin{figure}
\centering
\includegraphics[scale=0.53]{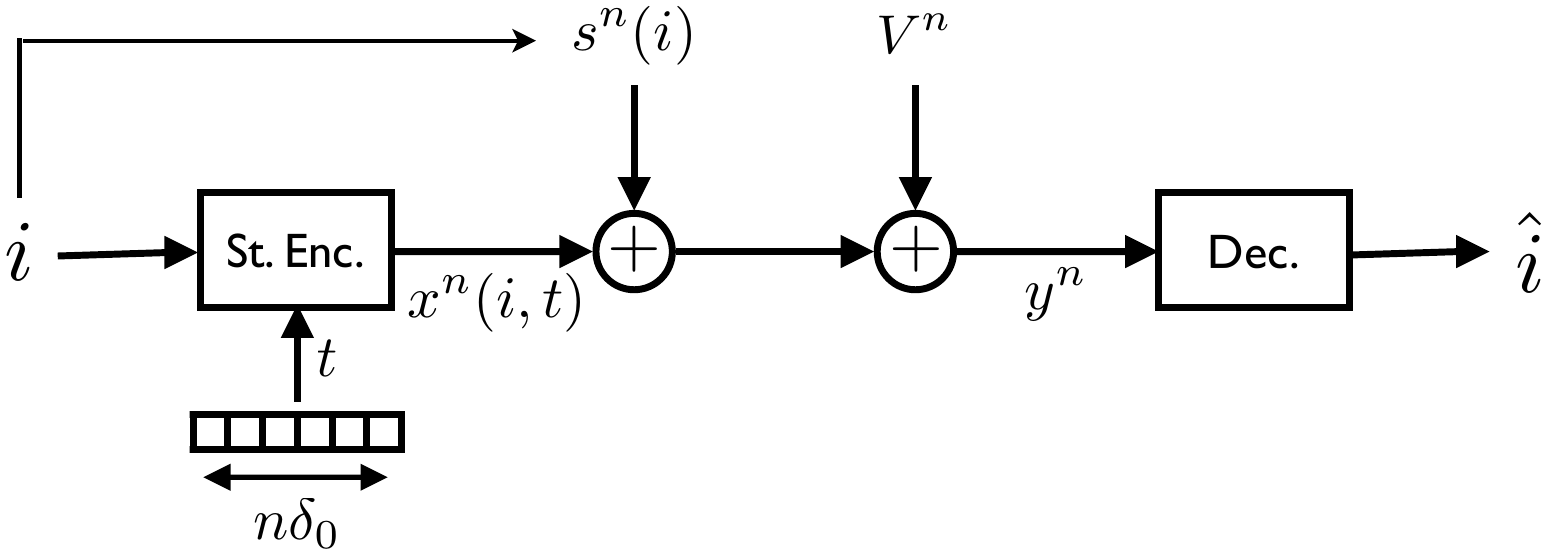}
\vspace{-7pt}
\caption{A power constraint AVC with stochastic encoder. Here 
we assume that the adversary has access to the transmitted message $i$
but not to the transmitted codeword $x^n(i,t)$.}
\label{fig:ProblemSetup-NoisyCase}
\end{figure}

A \emph{code with stochastic encoder} $(\Psi,\phi)$ of block-length $n$ 
consists of a set of encoders that are denoted by a random variable
$\Psi:\  \{1,\ldots, M \}\mapsto \mathbb{R}^n$ and a deterministic decoder 
$\phi:\ \mathbb{R}^n\mapsto \{0,\ldots,M\}$ where $0$ denote for 
an error and $M\triangleq e^{nR}$ is the number of messages\footnote{For 
notational convenience we assume that $e^{nR}$ is an integer.}. 
Each encoder $\psi$ is constructed by a set of codewords 
$\{ \vect{x}_1,\ldots,\vect{x}_M \}$ from $\mathbb{R}^n$.

Here in this paper, we focus on the \emph{maximum
probability of error}. First, for a fixed jamming vector $\vect{s}$, 
let us define the probability of error
given that the message $i$ has been sent as follows
\begin{equation}\label{eq:ErrorDef_GeneralCase}
e(\vect{s}, i) \triangleq \Pr_{\Psi,V} \left[ \phi \left( \Psi(i)+\vect{s} + \vect{V} \right) \neq i\right].
\end{equation}
Then the maximum probability of error for a fixed $\vect{s}$
is defined by
\begin{equation}\label{eq:ErrorMaxDef_GeneralCase}
e_{\mathrm{max}} (\vect{s}) \triangleq \max_{i\in\{1,\ldots, M\}} e(\vect{s} , i).
\end{equation}
Now the \emph{capacity} for the above channel can be stated as in 
Definition~\ref{def:Capacity}.

\begin{definition}\label{def:Capacity}
The capacity $C$ of an AVC with stochastic encoder
under the quadratic transmit constraint $P$ and jamming constraint $\Lambda$
is the supremum over the set of real numbers
such that for every $\delta>0$ and sufficiently large 
$n$ there exist codes  with stochastic encoder $(\Psi,\phi)$ that
satisfies the following conditions.
First, for the number of messages $M$ encoded by the code
we have $M>\exp(n(C-\delta))$. Moreover, each codeword satisfies the quadratic
constraint~\eqref{eq:InputConstraint-AVC} and finally for the code we have
\begin{equation*}
\lim_{n\to\infty} \sup_{\vect{s}: \|\vect{s} \|^2 \le n\Lambda} 
	e_{\mathrm{max}}(\vect{s}) = 0.
\end{equation*}
\end{definition}

\section{Main Results}\label{sec:MainResultAVC}
The main results of the paper, stated in Theorem~\ref{thm:CapacityAVC-Noiseless}
and its corollary.

\begin{theorem}\label{thm:CapacityAVC-Noisy}
The capacity of a quadratic-constrained AVC channel under the maximum 
probability of error criterion with transmit constraint $P$ and
jamming constraint $\Lambda$ and additive Gaussian noise
of power $\sigma^2$ is given by
\[
C=\left\{ \begin{array}{ll}
\frac{1}{2}\log(1+\frac{P}{\Lambda+\sigma^{2}}) & \mathrm{if} \; P>\Lambda,\\
0 & \mathrm{Otherwise}.
\end{array}\right.
\]
\end{theorem}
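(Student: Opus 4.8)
The plan is to prove the two branches of the formula separately.

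\emph{Converse.} If $P\le\Lambda$, no positive rate is possible. Given the public code $(\Psi,\phi)$ and the true message $i$ (which the jammer knows), the jammer draws a uniformly random $i'\neq i$, forms a codeword by running $\Psi$ on $i'$ with fresh randomness, and plays it as the state $\vect{s}$; this is admissible since every codeword obeys $\|\cdot\|^2\le nP\le n\Lambda$. Then the output $\Psi(i)+\vect{s}+\vect{V}$ has the same law whether $(i,i')$ or $(i',i)$ is the (true, impersonated) pair, so averaging over $i'$ yields an admissible realization of $\vect{s}$ and a message whose error is bounded away from $0$; hence $C=0$. For $P>\Lambda$, the bound $C\le\tfrac12\log(1+\tfrac{P}{\Lambda+\sigma^2})$ follows by letting the jammer play i.i.d.\ $N(0,\Lambda-\epsilon)$ noise (conditioned on the probability-$(1-o(1))$ event that the resulting $\vect{s}$ satisfies \eqref{eq:StateConstraint-AVC}), which, up to vanishing corrections, produces an AWGN channel of noise power $\Lambda-\epsilon+\sigma^2$; a Fano argument shows any rate above $\tfrac12\log(1+\tfrac{P}{\Lambda-\epsilon+\sigma^2})$ keeps the average (hence maximum) error over messages away from $0$ for some admissible $\vect{s}$, and letting $\epsilon\downarrow 0$ closes the gap.

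\emph{Achievability for $P>\Lambda$.} Fix a small $\rho>0$ and a rate $R$ slightly below $\tfrac12\log(1+\tfrac{P}{\Lambda+\sigma^2})$. I would draw $Me^{n\rho}$ i.i.d.\ codewords uniformly on the sphere of radius $\sqrt{nP}$ in $\mathbb{R}^n$, split them into $M=e^{nR}$ bins of $e^{n\rho}$ codewords, and (after a light expurgation removing the $o(1)$-fraction of codewords atypically close to another codeword) let the stochastic encoder $\Psi(i)$ output a uniformly random codeword $\vect{x}_{i,T}$ of bin $i$ --- $T$ being Alice's private randomness --- while $\phi$ outputs the bin index of the codeword nearest (Euclidean distance) to $\vect{y}$, declaring an error if that distance exceeds $\sqrt{n(\Lambda+\sigma^2)}(1+\epsilon)$. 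The binning is exactly what defeats a \emph{message-aware} jammer: he knows $i$ but not $T$, so his single $\vect{s}$ must simultaneously cope with all $e^{n\rho}$ candidates of bin $i$.

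\emph{Core estimate.} The goal is that, with probability $1-o(1)$ over the codebook, for \emph{every} message $i$ and \emph{every} admissible $\vect{s}$ the fraction of $t\in\mathrm{bin}(i)$ with $\Pr_{\vect{V}}[\phi(\vect{x}_{i,t}+\vect{s}+\vect{V})\neq i]$ non-negligible is $o(1)$, whence $e(\vect{s},i)=o(1)$ and $\lim_n\sup_{\vect{s}}e_{\max}(\vect{s})=0$. I would argue in three layers. (i) For a fixed $\vect{s}$, operator-norm concentration of $\sum_{t\in\mathrm{bin}(i)}\vect{x}_{i,t}\vect{x}_{i,t}^{\transpose}$ (whose mean is $e^{n\rho}P\,I_n$) forces $\sum_t\langle\vect{x}_{i,t},\vect{s}\rangle^2\le nP\Lambda\,e^{n\rho}(1+o(1))$ \emph{uniformly in} $\vect{s}$, so all but an $o(1)$ fraction of $t$ are ``generic'', i.e.\ $|\langle\vect{x}_{i,t},\vect{s}\rangle|=o(n)$; for such $t$ and typical $\vect{V}$ one has $\|\vect{y}\|^2\approx n(P+\Lambda+\sigma^2)$. (ii) For a generic $t$, decoding to a wrong bin requires a codeword of another bin inside a ball of radius $\approx\sqrt{n(\Lambda+\sigma^2)}$ about $\vect{y}$; the normalized spherical-cap measure of such a ball is $e^{-n(\frac12\log(1+P/(\Lambda+\sigma^2))-o(1))}$, and $P>\Lambda$ --- equivalently $|\langle\vect{x}',\vect{s}\rangle|\le\sqrt{P\Lambda}\,n<Pn$ for every codeword $\vect{x}'$, which (together with the expurgation ruling out anomalously close codewords) keeps this cap exponent both positive and strictly above $R+\rho$ --- makes the expected number of offending codewords exponentially small. (iii) Markov bounds the fraction of bad $t$ per pair $(i,\vect{s})$; Chernoff/McDiarmid concentration (the relevant events fluctuate doubly-exponentially little across the $\approx e^{n(R+\rho)}$ codewords) upgrades this to a high-probability bound for each $(i,\vect{s})$; and a union bound over the $e^{nR}$ messages and an $\eta$-net of $\mathcal{B}_n(\vect{0},\sqrt{n\Lambda})$, extended to all admissible $\vect{s}$ by the Lipschitz margin built into $\phi$, finishes. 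Letting $R\uparrow\tfrac12\log(1+\tfrac{P}{\Lambda+\sigma^2})$ completes the proof.

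\emph{Main obstacle.} The delicate point is layers (ii)--(iii) together: because the code is public, $\vect{s}$ may be correlated with the codewords, so one must certify that \emph{no} admissible $\vect{s}$ spoils a non-negligible fraction of \emph{any} bin, and this must hold simultaneously for a continuum of $\vect{s}$. Making it work needs all the ingredients --- the binning (the jammer's ignorance of $T$), the genericity of a random spherical code (no additive or near-collision structure, enforced by expurgation), and the condition $P>\Lambda$ entering quantitatively as $\langle\vect{x}',\vect{s}\rangle<nP$ --- and the bookkeeping that lines up the cap exponent, the number of codewords, the net cardinality, and the codebook-concentration rate is where the real work lies.
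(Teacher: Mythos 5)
Your plan follows essentially the same route as the paper: a binned spherical random code with a minimum-distance decoder, a fixed-$\vect{s}$ doubly-exponential concentration bound exploiting the exponentially many within-bin codewords, a union bound over the messages and an $\varepsilon$-net of $\mathcal{B}_n(\vect{0},\sqrt{n\Lambda})$ with a margin/Lipschitz argument extending to all admissible $\vect{s}$, together with the symmetrization converse for $P\le\Lambda$ and the AWGN-type rate upper bound. The only divergences are minor technical choices within the same architecture (the paper cites the Hughes--Narayan randomized-code capacity where you run a Gaussian-jammer/Fano argument, and it uses the Csisz\'ar--Narayan martingale lemma where you invoke expurgation plus McDiarmid-style concentration), so the proposal is essentially the paper's proof.
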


\begin{remark}
The result of Theorem~\ref{thm:CapacityAVC-Noisy} matches the result 
of stochastic encoder over discrete alphabets \cite{Ahls:ZWVG78}, \cite[Theorem~7]{LapNar-IT98}, 
in which it is shown that for the \emph{average}
probability of error criterion, using a stochastic 
encoder doesn't increase the capacity.
Because the number of possible adversarial actions here is uncountably 
large, the technique of \cite{Ahls:ZWVG78}, which relies on taking a union 
bound over at most exponential-sized set of possible adversarial actions, 
does not work.
\end{remark}

\begin{corollary}\label{thm:CapacityAVC-Noiseless}
The capacity of a quadratic-constrained AVC  under the maximum 
probability of error criterion with transmit constraint $P$ and
jamming constraint $\Lambda$ is given by
\[
C=\left\{ \begin{array}{ll}
\frac{1}{2}\log(1+\frac{P}{\Lambda}) & \mathrm{if} \;  P>\Lambda,\\
0 & \mathrm{Otherwise}.
\end{array}\right.
\]
\end{corollary}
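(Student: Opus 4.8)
The plan is to deduce Corollary~\ref{thm:CapacityAVC-Noiseless} from Theorem~\ref{thm:CapacityAVC-Noisy} by sandwiching the noiseless channel between the AWGN-corrupted ones, handling $P\le\Lambda$ separately by a spoofing argument. For achievability when $P>\Lambda$, fix a small $\sigma^2>0$: in the noiseless model~\eqref{eq:ChannelSetup} the receiver may always simulate i.i.d.\ $N(0,\sigma^2)$ noise and add it to its observation before decoding, turning the channel into an instance of the AWGN-corrupted AVC of Theorem~\ref{thm:CapacityAVC-Noisy} -- the jammer gains nothing, since this noise is private to the receiver and drawn after the state is fixed -- so every rate below $\frac12\log(1+\frac{P}{\Lambda+\sigma^2})$ is achievable, and $\sigma^2\downarrow0$ gives all rates below $\frac12\log(1+\frac{P}{\Lambda})$. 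Since the paper insists on a \emph{deterministic} decoder, I would instead inject this auxiliary noise at the transmitter (which is already stochastic), with a vanishing back-off in transmit power and with truncation to $\vect{0}$ on the $o(1)$-probability event that the peak constraint~\eqref{eq:InputConstraint-AVC} would be violated; this keeps the decoder a deterministic map and perturbs the achievable rate by only $o(1)$.

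For the converse when $P>\Lambda$ -- which is just the $\sigma^2=0$ instance of the converse to Theorem~\ref{thm:CapacityAVC-Noisy} -- I would have the jammer emulate Gaussian noise. Fix $\gamma\in(0,\Lambda)$, draw $\vect{S}\sim N(0,(\Lambda-\gamma)I_n)$ independently of the message and code (so the strategy is oblivious), and let the jammer transmit the admissible sequence $\vect{S}'\triangleq\vect{S}\cdot\mathbf{1}\{\|\vect{S}\|^2\le n\Lambda\}$; by concentration $\Pr[\vect{S}'\neq\vect{S}]\to0$, so the law of $\vect{S}'$ is within $o(1)$ total variation of $N(0,(\Lambda-\gamma)I_n)$. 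Since $e_{\max}(\vect{s})$ dominates the message-averaged error and $\vect{S}'$ is supported on admissible states, $\sup_{\vect{s}}e_{\max}(\vect{s})$ is at least the code's average error against $\vect{S}'$, hence within $o(1)$ of its average error on the genuine memoryless AWGN$(\Lambda-\gamma)$ channel. A standard Fano argument then shows that for $R>\frac12\log(1+\frac{P}{\Lambda-\gamma})$ this average error cannot vanish -- using $I(i;Y^n)\le I(\Psi(i);Y^n)\le\sum_jI(X_j;Y_j)$, the scalar AWGN bound, concavity, and the almost-sure estimate $\|\Psi(i)\|^2\le nP$ to cap the mutual information by $\frac n2\log(1+\frac{P}{\Lambda-\gamma})$ -- so $\sup_{\vect{s}}e_{\max}(\vect{s})$ cannot vanish either, a contradiction. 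Hence $C\le\frac12\log(1+\frac{P}{\Lambda-\gamma})$, and $\gamma\downarrow0$ gives $C\le\frac12\log(1+\frac{P}{\Lambda})$.

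When $P\le\Lambda$ the jammer symmetrizes (``spoofs''): holding the public code and the message index, it draws $I'$ uniform on $\{1,\dots,M\}$ and transmits $\Psi(I')$, which is admissible because $\|\Psi(I')\|^2\le nP\le n\Lambda$. The output $\Psi(i)+\Psi(I')$ is exchangeable in the pair $(i,I')$, so any deterministic decoder outputs the true message with probability at most $\tfrac12+\tfrac1{2M}$; thus $e_{\max}$ stays above $\tfrac12-o(1)$ for every code, and $C=0$.

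Being a corollary, none of the steps is deep. The two points that need care are: (i) in the converse, reconciling the jammer's i.i.d.-Gaussian strategy with the \emph{peak} energy budget~\eqref{eq:StateConstraint-AVC} -- handled by the truncation and total-variation estimate above, and noting that once the state is Gaussian the stochastic encoder is immaterial because the averaged channel is memoryless; and (ii) in the achievability, reconciling the transmitter-side auxiliary noise with~\eqref{eq:InputConstraint-AVC} while keeping the decoder deterministic. I expect (i) -- fitting a Gaussian jamming law inside a hard per-block power constraint -- to be the main, though mild, obstacle.
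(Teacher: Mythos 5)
Your proposal is correct, but it takes a more self-contained route than the paper, which treats the corollary simply as the $\sigma^2\to 0$ (indeed $\sigma^2=0$) specialization of Theorem~\ref{thm:CapacityAVC-Noisy}: the paper's achievability machinery (uniform codewords on the sphere of radius $\sqrt{nP}$, minimum-distance decoding, the doubly-exponential concentration of Lemma~\ref{lem:CodeProp_DoublyExpProb_Noisy}, and the $\varepsilon$-net over jamming vectors) goes through with $\vect{V}=0$, its converse upper bound is borrowed from the randomized-code capacity of~\cite{NarHugh-IT87} (sharing the encoder's randomness with the decoder can only help), and the $P\le\Lambda$ case is the symmetrization argument you also use. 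You differ in two places. For achievability you perform a genuine black-box reduction to Theorem~\ref{thm:CapacityAVC-Noisy} by injecting the auxiliary Gaussian noise at the (already stochastic) encoder with a vanishing power back-off and truncation; this is the right fix, since the naive ``receiver simulates the noise'' argument is indeed ruled out by the deterministic-decoder requirement, and your per-message error only degrades by the exponentially small truncation probability, uniformly in $\vect{s}$. For the converse when $P>\Lambda$ you reprove the upper bound from scratch with a truncated-Gaussian oblivious jammer of variance $\Lambda-\gamma$ plus Fano, rather than citing~\cite{NarHugh-IT87}; this costs a page of standard estimates (total-variation bound for the truncation, Jensen over the per-letter AWGN bounds, max error dominating average error) but buys independence from the randomized-coding literature and makes the ``corollary'' genuinely deducible without reopening the proof of the theorem. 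Two cosmetic points: the spoofing jammer should use an independent copy $\Psi'$ of the encoder (it can, because the code is public while Alice's realized randomness is not), and the bound ``decoder correct with probability at most $\tfrac12+\tfrac1{2M}$'' holds after averaging over a uniformly chosen true message, from which $e_{\max}\ge\tfrac12-\tfrac1{2M}$ follows since the maximum dominates the average -- both are trivially repaired and match the paper's own $1/4$-type bound.
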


\section{Proof of Main Results}\label{sec:ProofMainResultAVC}
In this section, we present the proof of 
Theorem~\ref{thm:CapacityAVC-Noisy} and its corollary. 
The proof of the converse
parts of Theorem~\ref{thm:CapacityAVC-Noisy} is stated in Section~\ref{sec:AVC-Noisy-Proof-Converse}. 

For the achievability part of Theorem~\ref{thm:CapacityAVC-Noisy},
we claim that the same \emph{minimum distance decoder} proposed in
\cite{CsisNar-IT91} to achieve the capacity for the average probability 
of error criterion, which is given by 
\begin{equation}\label{eq:MinDistncDecoder-AVC}
\phi(\vect{y})=\left\{ \begin{array}{ll}
i &  \mathrm{if} \; \|\vect{y}-\vect{x}_{i}\|^{2} < \|\vect{y}-\vect{x}_{j}\|^{2},\quad \mathrm{for} \; j\neq i,\\
0 &  \mathrm{if} \; \mathrm{no\; such} \; i:1\leq i\leq M\; \mathrm{exists},
\end{array}\right.
\end{equation}
also achieves the capacity for the maximum 
probability of error criterion.

Note that in order to show the suprimum over $\vect{s}$ subject to 
\eqref{eq:StateConstraint-AVC} of $e_{\text{max}}(\vect{s})$
goes to zero it is sufficient to show that for every message $i$
the suprimum over $\vect{s}$ subject to \eqref{eq:StateConstraint-AVC}
of $e(\vect{s}, i)$ goes to zero.

To communicate, Alice (the transmitter) randomly picks a codebook $\set{C}$ and fixes it.
The codebook $\set{C}$ comprises 
$e^{n(\delta_0+R)}$ codewords $\vect{x}(i,t)$, $1\le i\le e^{nR}$ and
$1\le t\le e^{n\delta_0}$, each chosen uniformly at random and 
independently from a sphere of radius $\sqrt{nP}$
as it is shown in Figure~\ref{fig:Stochastic_Codebook_AVC} (caption (a)).
Then, the $i$th row of the codebook, i.e., 
$\{ \vect{x}(i,1),\ldots,\vect{x}(i,e^{n\delta_0}) \}$, is assigned 
to the $i$th message. In order to transmit
the message $i$, the encoder randomly picks a codeword from the
$i$th row of the codebook and sends it over the channel.
\begin{figure}
\centering
\includegraphics[scale=1.6]{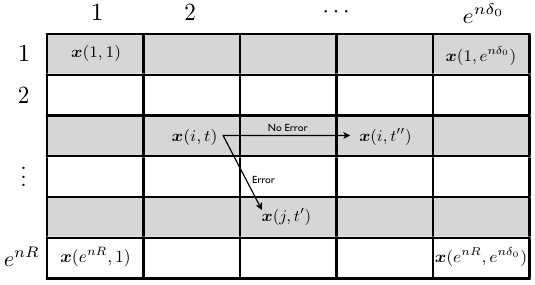}
\caption{(a) The codebook is constructed such that for sending a message 
$i\in \{1,\ldots,e^{nR}\}$ the encoder chooses one of the $e^{n\delta_0}$
codewords randomly from the $i$th row of the above table.
(b) Assuming that the codeword $\vect{x}(i,t)$ is sent, in our
model an error occurs if the ML decoder declares $\vect{x}(j,t')$
for some $j\neq i$. Note that there is no error if the decoder declares another
codeword from the $i$th row.}
\label{fig:Stochastic_Codebook_AVC}
\end{figure}

Now, given that the message $i$ has been transmitted, the error 
probability $e(\vect{s}, i)$ of an stochastic code 
used over a quadratic-constrained  AVC under the use of the minimum 
distance decoder (defined by \eqref{eq:MinDistncDecoder-AVC})
equals
\begin{align}\label{eq:Emax_given_i_noisy_AVC}
e( \vect{s} , i ) =& \Pr_{\Psi,V} \left[ \phi \left( \Psi(i)+\vect{s} + \vect{V} \right) \neq i\right] \nonumber\\
=& \Pr_{T} \Pr_{V} \Big[ \| \vect{x}(i,T)+\vect{s}+\vect{V}-\vect{x}(j,t') \|^{2} \nonumber\\
&\quad\leq \| \vect{s}+\vect{V} \|^{2} \; \text{for some $i\neq j$ and $t'$} \Big] \nonumber\\
\stackrel{}{=} & \Pr_T {\Pr}_{V} \Big[ \langle \vect{x}(j,t'),\vect{x}(i,T) + \vect{s} + \vect{V} \rangle \ge nP \nonumber\\
&\quad +\langle \vect{x}(i,T), \vect{s} + \vect{V}\rangle\; \text{for some $j\neq i$ and $t'$} \Big].
\end{align}
where $T$ is a uniformly distributed random variable
defined over the set $\{1,\ldots,e^{n\delta_0}\}$.
Figure~\ref{fig:Stochastic_Codebook_AVC} (caption (b)) pictorially 
demonstrates the decoding errors at the decoder.
\subsection{Achievability proof of Theorem~\ref{thm:CapacityAVC-Noisy}}

The main step in proving the achievability part of Theorem~\ref{thm:CapacityAVC-Noisy}
consists in asserting the doubly exponential probability bounds which
is stated in Lemma~\ref{lem:CodeProp_DoublyExpProb_Noisy}.
\begin{lemma}\label{lem:CodeProp_DoublyExpProb_Noisy}
Let $\set{C}^*=\{\vect{X}(i, t)\}$ in which $1\leq i\leq \exp(nR)$ and $1\leq t\leq \exp(n\delta_{0})$
be a random codebook comprises of independent random vectors $\vect{X}(i,t)$ each uniformly distributed
on the $n$-dimensional sphere of radius $\sqrt{nP}$. First, fix a vector 
$\vect{s}\in\mathcal{B}_n( 0,\sqrt{n\Lambda} )$. Then for every 
$\delta_0>\delta_1>0$ and for sufficiently large $n$ if
$R<\frac{1}{2}\log\left(1+\frac{P}{\sigma^2+\Lambda} \right)$ we have
\begin{align*}
&{\Pr}_{\set{C}^*} \bigg[ \Pr_T  \Pr_{V} \Big[ \langle \vect{X}(j,t'),\vect{X}(i,T) +\vect{s}+\vect{V}\rangle \geq nP \\
&\quad\quad +\langle \vect{X}(i,T), \vect{s} +\vect{V} \rangle \; \mathrm{for}\; \mathrm{some}\; \mathrm{j}\neq \mathrm{i}\; \mathrm{and}\; t'  \Big] \geq K e^{-n\delta_{1}} \bigg]\\
&\leq \exp \Big( -(K\log{2}-10) \exp((\delta_{0}-\delta_{1})n) \Big).
\end{align*}
\end{lemma}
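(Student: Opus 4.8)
With $\vect s$ fixed as in the statement, the plan is to fix also a message $i$, abbreviate the inner double probability as $Q(\set C)=\Pr_T\Pr_V[\cdots]$, and write it as the average $Q(\set C)=e^{-n\delta_0}\sum_{t=1}^{e^{n\delta_0}}q_t$ over the $e^{n\delta_0}$ codewords assigned to message $i$, where $q_t=\Pr_V[\,\exists\, j\neq i, t' :\ \langle\vect X(j,t'),\vect X(i,t)+\vect s+\vect V\rangle\ge nP+\langle\vect X(i,t),\vect s+\vect V\rangle\,]$. Conditioning on the sub-codebook $\set C^*_{-i}=\{\vect X(j,t'):j\neq i\}$, the variables $q_1,\dots,q_{e^{n\delta_0}}$ become i.i.d.\ on $[0,1]$, since each is the same measurable function of the independent $\vect X(i,t)$. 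The argument then has three ingredients: (i) a bound $\mu(\set C^*_{-i})\triangleq\Ex[q_t\mid\set C^*_{-i}]\le\tfrac{K}{16e}e^{-n\delta_1}$ valid for all but a \emph{doubly}-exponentially small fraction of sub-codebooks; (ii) the Chernoff bound for i.i.d.\ $[0,1]$ variables, $\Pr[\,e^{-n\delta_0}\sum_t q_t\ge\theta\mid\set C^*_{-i}\,]\le\exp(-e^{n\delta_0}\theta(\log(\theta/\mu)-1))$, which with $\theta=\tfrac12Ke^{-n\delta_1}$ and the bound from (i) yields $\exp(-(K\log2-10)e^{(\delta_0-\delta_1)n})$ once $K$ is a large enough constant; (iii) a union over the exceptional sub-codebooks, whose probability is doubly-exponentially small and is absorbed into the ``$-10$'' slack. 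So the whole content lies in establishing (i).

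To get (i) I would first pass to a typical event $\mathcal{G}$ for $(\vect X(i,t),\vect V)$: that $\|\vect V\|^2$ is within a fixed constant factor of $n\sigma^2$ and that $\vect X(i,t)$ is nearly orthogonal to $\vect s$ and to $\vect V$, with $\langle\vect s,\vect V\rangle=o(n)$. By Gaussian tails and the cap estimate $\Pr[\,|\langle\vect X,\vect u\rangle|\ge n\epsilon\,]\le e^{-n c\epsilon^2(1+o(1))}$ one can choose the tolerances so that $\Pr[\mathcal{G}^c]\le e^{-n\eta}$ with the constant $\eta>\delta_1$. Then $q_t\le e^{-n\eta}+\tilde q_t$ with $\tilde q_t\triangleq\Pr_V[\text{err}\cap\mathcal{G}]$, and a union bound over $\set C^*_{-i}$ gives $\tilde q_t\le\sum_{\vect c\in\set C^*_{-i}}\rho(\vect c)$, where for a \emph{fixed} $\vect c$ on the sphere $\rho(\vect c)\triangleq\Pr_{\vect X,\vect V}[\,\langle\vect c,\vect X+\vect s+\vect V\rangle\ge nP+\langle\vect X,\vect s+\vect V\rangle,\ \mathcal{G}\,]$. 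The key estimate is that, parametrising $\vect c$ by its cosine $\xi$ with the jamming direction, $\rho(\vect c)\le\exp(-n(g(\xi)-o(1)))$ for a nonnegative function $g$ with $g(0)=C_0\triangleq\tfrac12\log(1+\tfrac{P}{\Lambda+\sigma^2})$, together with the ``conservation law'' $g(\xi)+\kappa(\xi)\ge C_0$, where $\kappa(\xi)\triangleq-\tfrac12\log(1-\xi^2)$ is the exponent of the probability that a random sphere point has cosine $\ge\xi$ with $\vect s$. This follows from the spherical-cap picture: on $\mathcal{G}$ the vector $\vect y=\vect X(i,t)+\vect s+\vect V$ lies at a typical location, and for fixed typical $\vect x,\vect v$ the chance that a \emph{uniformly random} competing codeword lands in the decoding region of $\vect y$ is $\Pr[\cos\Theta\ge\cos\angle(\vect x,\vect y)]\approx(\sin^2\angle(\vect x,\vect y))^{n/2}=e^{-nC_0}$; expanding this random-codeword probability as the $\xi$-average $\int e^{-n\kappa(\xi)}e^{-ng(\xi)}\,d\xi\approx e^{-n\min_\xi(g+\kappa)}$ forces $\min_\xi(g+\kappa)=C_0$.

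I would then call $\set C^*_{-i}$ \emph{good} if for every $\xi$ on a fine grid its number of codewords with cosine $\ge\xi$ with $\vect s$ is at most $\max\{e^{n(R+\delta_0-\kappa(\xi)+\epsilon)},\,e^{nc_0}\}$, for small fixed $\epsilon,c_0$. This count is binomial over the i.i.d.\ codewords of $\set C^*_{-i}$ with mean $e^{n(R+\delta_0-\kappa(\xi))}$, so a Chernoff/binomial-tail bound makes $\set C^*_{-i}$ good except with doubly-exponentially small probability -- crucially, when the mean drops below $e^{nc_0}$ the relevant tail is $\binom{e^{n(R+\delta_0)}}{e^{nc_0}}(e^{-n\kappa(\xi)})^{e^{nc_0}}\le\exp(n(R+\delta_0-\kappa(\xi))e^{nc_0})$, which is doubly exponential in $n$ since $\kappa(\xi)>R+\delta_0$ there. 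For good $\set C^*_{-i}$, summing over bins gives $\mu(\set C^*_{-i})\le e^{n(R+\delta_0-C_0+\epsilon)}+e^{n(c_0-g_{\min})}$, using $g+\kappa\ge C_0$ for the first term and $g_{\min}\triangleq\inf_\xi g(\xi)>0$ for the second (positive because even a fully-$\vect s$-aligned codeword needs an atypical noise realisation to confuse when $P>\Lambda$). Choosing $\epsilon,c_0$ small enough that $R+\delta_0+\delta_1+\epsilon<C_0$ and $c_0+\delta_1<g_{\min}$ -- which is where one uses that $R$ is below $C_0$ and $\delta_0,\delta_1$ are small -- this is $\le\tfrac{K}{16e}e^{-n\delta_1}$ for all large $n$, establishing (i). Combining (i), (ii), (iii) with $Q(\set C)\le e^{-n\eta}+e^{-n\delta_0}\sum_t\tilde q_t\le\tfrac12Ke^{-n\delta_1}+e^{-n\delta_0}\sum_t\tilde q_t$ yields the claim.

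The main obstacle I expect is the per-codeword estimate and its conservation law $g(\xi)+\kappa(\xi)\ge C_0$: the codewords of $\set C^*_{-i}$ that align with $\vect s$ have individual confusion probabilities \emph{strictly larger} than $e^{-nC_0}$, so one must argue both that they are rare enough to be controlled doubly-exponentially \emph{and} that their excess is exactly offset by that rarity; the secondary bookkeeping burden will be keeping the typicality tolerances large enough to force $e^{-n\eta}<e^{-n\delta_1}$ while keeping the grid constants $\epsilon,c_0$ small enough to respect $R+\delta_0+\delta_1<C_0$ and $c_0+\delta_1<g_{\min}$.
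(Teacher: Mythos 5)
Your plan is sound, but it is a genuinely different route from the paper's. The paper never conditions on the competing rows: it defines $f_t$ as the error probability with the competitors $\vect{X}(j,t')$ still random, bounds $\Ex_{\set{C}^*}\left[ f_t \mid \vect{X}(i,1),\ldots,\vect{X}(i,t-1)\right]\le 10e^{-n\delta_1}$ using only chi-square/Gaussian-tail typicality of $\vect{s}+\vect{V}$ and the Csisz\'ar--Narayan spherical-cap bound (Lemma~\ref{lem:BoundDotProduct}) -- applied once to $\langle\vect{X}(i,t),\vect{s}+\vect{V}\rangle$ and once, after a union bound over the $e^{n(R+\delta_0)}$ pairs $(j,t')$, to a \emph{uniformly random} competitor against the fixed direction $\vect{X}(i,t)+\vect{s}+\vect{V}$ -- and then invokes the martingale concentration lemma (Lemma~\ref{lem:CsiszarNar-Lem-A1}) over the row-$i$ codewords to get the doubly exponential bound. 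Because the competitor is averaged \emph{inside} the expectation, no per-competitor geometry is needed: the cap bound with $\alpha^2\to P/(P+\Lambda+\sigma^2)$ yields the exponent $C_0=\tfrac12\log(1+\tfrac{P}{\Lambda+\sigma^2})$ directly. You instead freeze $\set{C}^*_{-i}$, exploit that $q_1,\ldots,q_{e^{n\delta_0}}$ are then i.i.d., and use a plain Chernoff bound; the price is controlling the conditional mean for a \emph{fixed} competitor set, which forces your entire second layer (the per-angle exponent $g(\xi)$, the conservation law $g+\kappa\ge C_0$, the cap-counting ``good sub-codebook'' event, and $g_{\min}>0$). What your route buys is a cleaner concentration step: Lemma~\ref{lem:CsiszarNar-Lem-A1} requires an \emph{almost-sure} bound on the conditional expectation, and since the paper's $f_t$ in fact also depends on the competitor rows, verifying that bound only after averaging over them (as the paper does) is precisely the issue your good/bad-sub-codebook split repairs; what the paper's route buys is that the angular analysis disappears entirely.

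Caveats you should make explicit (you flag the first yourself): (1) ingredient (i) is the real work and is only sketched -- deducing $g(\xi)+\kappa(\xi)\ge C_0$ from the averaged cap bound needs either monotonicity of $\rho$ in $\xi$ or, more cleanly, a direct large-deviation estimate for a fixed $\vect{c}$ at angle $\xi$ (one more cap/Gaussian computation). (2) Your route genuinely needs $P>\Lambda$: otherwise $g_{\min}=0$ for competitors aligned with an $\vect{s}$ of norm near $\sqrt{n\Lambda}$, and such competitors are only exponentially, not doubly exponentially, unlikely; the paper's proof of this lemma never uses $P>\Lambda$. In context this is harmless, since achievability is claimed only for $P>\Lambda$. (3) The bad-sub-codebook probability decays like $\exp(-c\,n\,e^{nc_0})$, so to absorb it into the stated bound you also need $c_0\ge\delta_0-\delta_1$, hence $\delta_0<g_{\min}$ in addition to $c_0+\delta_1<g_{\min}$; like the paper's own hidden constraints (its proof really requires roughly $R<C_0-\delta_0-\delta_1$ and smallness conditions on $\delta_1$), this restricts the advertised ``for every $\delta_0>\delta_1>0$'' and should be stated.
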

\begin{proof}
For the proof refer to the appendix.
\end{proof}

\begin{lemma}[Quantizing Adversarial Vector] \label{lem:SmallPerturbation_s-NoisyAVC}
For a fixed jamming vector $\vect{s}$, for sufficiently 
small $\varepsilon>0$, and for every $\delta_0>\delta_1>0$, 
there exists a codebook $\set{C}=\{\vect{x}(i,t)\}$ of rate 
$R\le \frac{1}{2}\log(1+\frac{P}{\Lambda+\sigma^2})$ comprises of 
vectors $\vect{x}(i, t) \in\mathbb{R}^{n}$ of size $\sqrt{nP}$ with 
$1\leq i\leq e^{nR}$ and $1\leq t\leq e^{n\delta_{0}}$
which performs well over the AVC defined in 
Section~\ref{sec:ProbStatementAVC} for all 
$\vect{s}'\in\mathcal{B}_n(\vect{s},\varepsilon)$, i.e., 
it satisfies 
\begin{align}\label{eq:Lem_QuantAdvrsryVec-Noisy-ErrEq}
e(\vect{s}, i) &= {\Pr}_{T}  \Pr_V \Big[ \langle \vect{x}(j,t'),\vect{x}(i,T)+\vect{s}+\vect{V} \rangle \nonumber\\
&\hspace{20pt} \geq nP+\langle \vect{x}(i,T),\vect{s} + \vect{V} \rangle \; \mathrm{for\; some}\; j\neq i \; \mathrm{and}\; t'  \Big] \nonumber\\
& <K\exp(-n\delta_{1})
\end{align}
for all $\vect{s}'\in\mathcal{B}_n(\vect{s},\varepsilon)$.
\end{lemma}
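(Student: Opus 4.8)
The plan is to \emph{derandomize} Lemma~\ref{lem:CodeProp_DoublyExpProb_Noisy} over a fine quantization of the ball $\mathcal{B}_n(\vect{s},\varepsilon)$, and then show, by a Cauchy--Schwarz argument, that the (exponentially small) error probability is stable under $\varepsilon$-perturbations of the adversarial vector. Since Lemma~\ref{lem:CodeProp_DoublyExpProb_Noisy} requires the rate to be \emph{strictly} below $\tfrac12\log(1+\tfrac{P}{\Lambda+\sigma^2})$, I first shave a negligible amount off $R$ to open a fixed margin into which all the $o(1)$ losses below will be absorbed (this costs nothing, as the capacity proof only ever invokes rates strictly below $C$). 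Now fix a small constant mesh $\eta\in(0,\varepsilon)$ and an $\eta$-net $\mathcal{N}\subset\mathcal{B}_n(\vect{s},\varepsilon)$, so $|\mathcal{N}|\le(1+2\varepsilon/\eta)^n=e^{cn}$ for a constant $c>0$, and set $\tau_n:=2\sqrt{P}\,\eta/\sqrt{n}\to0$. For each net point $\vect{s}_k$ (which obeys $\|\vect{s}_k\|^2\le n(\Lambda+o(1))$) and each message $i$, I apply the mild variant of Lemma~\ref{lem:CodeProp_DoublyExpProb_Noisy} — proven by the same argument, with the threshold $nP$ replaced by $n(P-\tau_n)$ and $\Lambda$ by $\Lambda+o(1)$ — which still yields probability at most $\exp(-(K\log2-10)e^{(\delta_0-\delta_1)n})$ over $\set{C}^*$, since the rate margin absorbs the $o(1)$ shift of $\tfrac12\log(1+\tfrac{P-\tau_n}{\Lambda+\sigma^2+o(1)})$. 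A union bound over the $|\mathcal{N}|\,e^{nR}=e^{O(n)}$ pairs $(\vect{s}_k,i)$ is crushed by this doubly-exponential bound (take $K$ large enough that $K\log2-10>0$), so with probability tending to $1$ a random codebook $\set{C}$ avoids every bad event; fix one such $\set{C}$.

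\textbf{Step 2: from the net to the whole ball.} Let $\vect{s}'\in\mathcal{B}_n(\vect{s},\varepsilon)$ be arbitrary, let $\vect{s}_k$ be its nearest net point, and write $\vect{s}'=\vect{s}_k+\vect{\delta}$ with $\|\vect{\delta}\|\le\eta$. Expanding the inequality that defines the error event for $\vect{s}'$ and using Cauchy--Schwarz with $\|\vect{x}(i,T)\|=\|\vect{x}(j,t')\|=\sqrt{nP}$, the discrepancy term $\langle\vect{x}(i,T)-\vect{x}(j,t'),\vect{\delta}\rangle$ is at least $-2\sqrt{nP}\,\eta=-n\tau_n$; hence, for every realization of $(T,\vect{V})$ and every $(j,t')$, the error event for $\vect{s}'$ implies the $(P-\tau_n)$-relaxed error event for $\vect{s}_k$. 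Taking ``for some $j\ne i,t'$'' and then $\Pr_T\Pr_V$, this shows $e(\vect{s}',i)$ is at most the relaxed error probability at $\vect{s}_k$, which is $<Ke^{-n\delta_1}$ by Step 1. As $\vect{s}'$ and $i$ were arbitrary, the fixed codebook $\set{C}$ satisfies \eqref{eq:Lem_QuantAdvrsryVec-Noisy-ErrEq} for all $\vect{s}'\in\mathcal{B}_n(\vect{s},\varepsilon)$, which is the claim.

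\textbf{Main obstacle.} The delicate point is the passage from the countable net to the uncountable ball. Because $e(\vect{s},i)$ is itself exponentially small, one cannot argue via a crude estimate of the form $|e(\vect{s}',i)-e(\vect{s}_k,i)|\le(\text{small})$ — a vanishing perturbation of the threshold could a priori blow up an $e^{-n\delta_1}$ event by far more. The remedy is to trade continuity of the \emph{probability} for containment of the \emph{events}, reducing control of $\vect{s}'$ to control of the \emph{same} slightly-relaxed event at the net point; this succeeds precisely because the rate is bounded away from $\tfrac12\log(1+\tfrac{P}{\Lambda+\sigma^2})$ — leaving slack to absorb the $n\tau_n=o(n)$ threshold shift and the $o(n)$ enlargement of the jamming ball — and because the doubly-exponential bound of Lemma~\ref{lem:CodeProp_DoublyExpProb_Noisy} is strong enough to survive both the $e^{O(n)}$-sized net and the union over all $e^{nR}$ messages.
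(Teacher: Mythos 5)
Your proof is correct and takes essentially the same route as the paper: in both, the key step is the Cauchy--Schwarz containment of the error event at a perturbed jamming vector inside a slightly threshold-relaxed error event at a reference point (a relaxation of order $\sqrt{n}$, hence a vanishing rate loss), combined with Lemma~\ref{lem:CodeProp_DoublyExpProb_Noisy} applied to that relaxed event. The only difference is organizational: the paper relaxes around the single point $\vect{s}$ itself (by $2\varepsilon\sqrt{nP}$) and defers the union bounds over quantization points and messages to Lemma~\ref{lem:CodebookExistence-Noisy}, while you insert a redundant inner $\eta$-net of $\mathcal{B}_n(\vect{s},\varepsilon)$ and carry out those union bounds already here, which is harmless since the doubly-exponential bound absorbs them.
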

\begin{proof}
For a particular $\vect{s}$, instead of \eqref{eq:Lem_QuantAdvrsryVec-Noisy-ErrEq}, 
let us assume that the code $\set{C}$ satisfies a stronger condition
\begin{align}\label{eq:Lem_QuantAdvrsryVec-Noisy-ModifiedErrEq}
&{\Pr}_{T} \Pr_V \Big[ \langle \vect{x}(j,t'),\vect{x}(i,T) + \vect{s} + \vect{V} \rangle \geq nP \nonumber\\
&\hspace{30pt}  - 2\varepsilon \sqrt{nP} +\langle \vect{x}(i,T),\vect{s}+\vect{V} \rangle \; \mathrm{for\; some}\; j\neq i \; \mathrm{and}\; t' \Big] \nonumber\\
& <K\exp(-n\delta_{1}). 
\end{align}
Then it can be verified that for all $\vect{s}'\in\mathcal{B}_n(\vect{s},\varepsilon)$
the code $\set{C}$ satisfies \eqref{eq:Lem_QuantAdvrsryVec-Noisy-ErrEq}
where $\vect{s}$ is replaced by $\vect{s}'$. To show this let $\vect{s}'=\vect{s}+\rho \vect{u}$
where $\vect{u}$ is an arbitrary unit vector and $\rho\in[-\varepsilon, \varepsilon]$.
Hence for all $\vect{s}'\in\mathcal{B}_n(\vect{s},\varepsilon)$ we can write
\begin{align*}
e(\vect{s}', i) &= {\Pr}_{T} \Pr_V \Big[ \langle \vect{x}(j,t'),\vect{x}(i,T)+\vect{s}'+\vect{V} \rangle \nonumber\\
&\hspace{17pt} \geq nP+\langle \vect{x}(i,T),\vect{s}' + \vect{V} \rangle \; \mathrm{for\; some}\; j\neq i \; \mathrm{and}\; t' \Big] \nonumber\\
&= {\Pr}_{T} \Pr_V \Big[ \langle \vect{x}(j,t'),\vect{x}(i,T)+\vect{s}+\vect{V} \rangle + \rho \langle \vect{x}(j,t'),\vect{u} \rangle \nonumber\\
&\hspace{17pt} \geq nP+\langle \vect{x}(i,T),\vect{s} + \vect{V} \rangle \nonumber\\
&\hspace{46pt} + \rho \langle \vect{x}(i,T), \vect{u} \rangle \; \mathrm{for\; some}\; j\neq i \; \mathrm{and}\; t' \Big] \nonumber\\
&\le {\Pr}_{T} \Pr_V \Big[ \langle \vect{x}(j,t'),\vect{x}(i,T)+\vect{s}+\vect{V} \rangle +  \varepsilon\sqrt{nP} \nonumber\\
&\hspace{17pt} \geq nP+\langle \vect{x}(i,T),\vect{s} + \vect{V} \rangle \nonumber\\
&\hspace{46pt} -\varepsilon\sqrt{nP} \; \mathrm{for\; some}\; j\neq i \; \mathrm{and}\; t' \Big] \nonumber\\
&\stackrel{\text{(a)}}{\le} K\exp(-n\delta_1),
\end{align*}
where (a) follows from \eqref{eq:Lem_QuantAdvrsryVec-Noisy-ModifiedErrEq}.

Now, in Lemma~\ref{lem:CodeProp_DoublyExpProb_Noisy} we can use
the stronger error requirement \eqref{eq:Lem_QuantAdvrsryVec-Noisy-ModifiedErrEq} 
to show that there exists a code which satisfies \eqref{eq:Lem_QuantAdvrsryVec-Noisy-ModifiedErrEq}.
This stronger requirement results in a rate loss, but 
as $\varepsilon$ goes to zero the rate loss due to that vanishes.
By the above argument, we know that this code satisfies \eqref{eq:Lem_QuantAdvrsryVec-Noisy-ErrEq}
for all $\vect{s}'\in\mathcal{B}_n(\vect{s},\varepsilon)$ and we are done.
\end{proof}

Finally, Lemma~\ref{lem:CodebookExistence-Noisy}
shows the existence of a good codebook for the quadratic constrained AVC
problem with stochastic encoder which have been introduced in Section~\ref{sec:ProbStatement-ScalarAVC}
and hence completes the proof of Theorem~\ref{thm:CapacityAVC-Noisy}.
\begin{lemma}[Codebook Existence] \label{lem:CodebookExistence-Noisy}
For every $\delta_{0}>\delta_{1}>0$ and $n\geq n_{0}(\delta_{0},\delta_{1})$ 
there exist a codebook $\set{C} = \{\vect{x}(i,t)\}$ of rate 
$R\leq\frac{1}{2} \log(1+\frac{P}{\sigma^2+\Lambda})$ comprises of 
vectors $\vect{x}(i, t) \in\mathbb{R}^{n}$ of size $\sqrt{nP}$ with $1\leq i\leq e^{nR}$ and 
$1\leq t\leq e^{n\delta_{0}}$ such that for every vector $\vect{s}$ and
every transmitted message $i$ we have
\begin{align}\label{eq:LemCodeProp-Noisy-ErrEq}
e(\vect{s}, i) &= {\Pr}_{T} \Pr_V \Big[ \langle \vect{x}(j,t'),\vect{x}(i,T)+\vect{s}+\vect{V} \rangle \nonumber\\
&\hspace{20pt} \geq nP+\langle \vect{x}(i,T),\vect{s} + \vect{V} \rangle \; \mathrm{for\; some}\; j\neq i \; \mathrm{and}\; t' \Big] \nonumber\\
& <K\exp(-n\delta_{1}). 
\end{align}
\end{lemma}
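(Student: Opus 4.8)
The plan is to combine Lemma~\ref{lem:CodeProp_DoublyExpProb_Noisy} and Lemma~\ref{lem:SmallPerturbation_s-NoisyAVC} via a net/union-bound argument over the adversary's action space. Recall that Lemma~\ref{lem:SmallPerturbation_s-NoisyAVC} says that for each fixed $\vect{s}$, a random codebook (satisfying the $\varepsilon$-strengthened error bound \eqref{eq:Lem_QuantAdvrsryVec-Noisy-ModifiedErrEq}) works simultaneously for every $\vect{s}'\in\mathcal{B}_n(\vect{s},\varepsilon)$; and the doubly-exponential bound in Lemma~\ref{lem:CodeProp_DoublyExpProb_Noisy} bounds the probability that a random codebook fails the (strengthened) requirement for one fixed $\vect{s}$. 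The ball $\mathcal{B}_n(0,\sqrt{n\Lambda})$ of permissible jamming vectors is compact, so it can be covered by a finite $\varepsilon$-net $\{\vect{s}_1,\ldots,\vect{s}_N\}$; a standard volumetric estimate gives $N\le \left(\frac{3\sqrt{n\Lambda}}{\varepsilon}\right)^n$, which is merely exponential in $n$.

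Concretely, I would proceed as follows. First, fix $\varepsilon>0$ small (to be sent to zero at the end, absorbing a vanishing rate loss as noted in Lemma~\ref{lem:SmallPerturbation_s-NoisyAVC}), and fix $\delta_0>\delta_1>0$ with $R\le\frac12\log(1+\frac{P}{\sigma^2+\Lambda})$. Second, take an $\varepsilon$-net $\{\vect{s}_1,\ldots,\vect{s}_N\}$ of $\mathcal{B}_n(0,\sqrt{n\Lambda})$ with $N\le (3\sqrt{n\Lambda}/\varepsilon)^n$. Third, apply Lemma~\ref{lem:CodeProp_DoublyExpProb_Noisy} (with the strengthened requirement, per the last paragraph of the proof of Lemma~\ref{lem:SmallPerturbation_s-NoisyAVC}) to each net point $\vect{s}_k$: the probability that the random codebook $\set{C}^*$ fails the strengthened bound at $\vect{s}_k$ is at most $\exp\!\big(-(K\log2-10)e^{(\delta_0-\delta_1)n}\big)$. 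Fourth, union-bound over the $N$ net points:
\begin{align*}
&{\Pr}_{\set{C}^*}\big[\,\exists\, k:\ \set{C}^* \text{ fails strengthened bound at } \vect{s}_k\,\big]\\
&\quad\le N\exp\!\big(-(K\log2-10)e^{(\delta_0-\delta_1)n}\big)\\
&\quad\le \Big(\tfrac{3\sqrt{n\Lambda}}{\varepsilon}\Big)^{n}\exp\!\big(-(K\log2-10)e^{(\delta_0-\delta_1)n}\big).
\end{align*}
Since $\delta_0>\delta_1$, the inner term is doubly exponentially small while $N$ is only exponentially large, so for $K$ chosen large enough (e.g., $K>10/\log 2$) and $n\ge n_0(\delta_0,\delta_1)$ this probability is strictly less than $1$. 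Hence there exists a deterministic codebook $\set{C}=\{\vect{x}(i,t)\}$ that satisfies the strengthened error bound at every net point $\vect{s}_k$.

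Fifth and finally, I would invoke the quantization argument inside Lemma~\ref{lem:SmallPerturbation_s-NoisyAVC}: because $\set{C}$ meets the $\varepsilon$-strengthened inequality \eqref{eq:Lem_QuantAdvrsryVec-Noisy-ModifiedErrEq} at each $\vect{s}_k$, it meets the original bound \eqref{eq:Lem_QuantAdvrsryVec-Noisy-ErrEq} (equivalently \eqref{eq:LemCodeProp-Noisy-ErrEq}) for every $\vect{s}'\in\mathcal{B}_n(\vect{s}_k,\varepsilon)$. Since the net points' $\varepsilon$-balls cover all of $\mathcal{B}_n(0,\sqrt{n\Lambda})$, this gives $e(\vect{s},i)<K\exp(-n\delta_1)$ for \emph{every} permissible jamming vector $\vect{s}$ and every message $i$, which is exactly the claim. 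I expect the main obstacle — really the crux of the whole argument — to be the bookkeeping on scales: one must check that the (at most exponential) net size $N$ genuinely loses to the doubly exponential failure probability $\exp(-(K\log2-10)e^{(\delta_0-\delta_1)n})$ uniformly in $n$, and that the codeword norm constraint $\|\vect{x}(i,t)\|^2\le nP$ survives the net argument (it does, since every vector in the random ensemble lies exactly on the sphere of radius $\sqrt{nP}$). The $\varepsilon\to 0$ limit then recovers the target rate $\frac12\log(1+\frac{P}{\sigma^2+\Lambda})$, completing the achievability proof of Theorem~\ref{thm:CapacityAVC-Noisy}.
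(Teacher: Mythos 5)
Your proposal is correct and follows essentially the same route as the paper's own proof: an $\varepsilon$-net of $\mathcal{B}_n(0,\sqrt{n\Lambda})$ (the paper bounds its size by $(2\sqrt{n\Lambda}/\varepsilon)^n$ via a hypercube covering), a union bound whose merely exponential size is defeated by the doubly exponential failure probability of Lemma~\ref{lem:CodeProp_DoublyExpProb_Noisy}, and the quantization argument of Lemma~\ref{lem:SmallPerturbation_s-NoisyAVC} to pass from net points to all permissible $\vect{s}$. The only detail the paper makes explicit that you leave implicit is the additional union over the $e^{nR}$ messages $i$ (Lemma~\ref{lem:CodeProp_DoublyExpProb_Noisy} is stated for a fixed message), which contributes another exponential factor and changes nothing in the conclusion.
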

\begin{proof}
For any fixed codebook $\set{C}=\{\vect{x} (i,t)\}$, let us explicitly mention to 
the dependency of the error probability on $\set{C}$ by defining 
$e_{\set{C} } (\vect{s}, i) \triangleq e(\vect{s}, i)$. 
Then in order to prove the assertion of lemma we can equivalently show that
\[
\liminf_{n\rightarrow\infty} \Pr_{\set{C}^*}\left[ \forall\vect{s}, \forall i \:\:\: e_{\set{C}^*}(\vect{s}, i) < Ke^{-n\delta_1}\right] > 0.
\]
However, by using Lemma~\ref{lem:SmallPerturbation_s-NoisyAVC}, it is not necessary
to check for all $\vect{s}$ but only for those belonging to an $\varepsilon$-net\footnote{An $\varepsilon$-net 
is a set of points in a metric space such that each point of the 
space is within distance $\varepsilon$ of some point in the set.}
$\chi_n$ that covers $\mathcal{B}_n(0,\sqrt{n\Lambda})$. 

Hence, we can write
\begin{align*}
& \Pr_{\set{C}^*}\left[ \forall\vect{s}\in\chi_n, \forall i \:\:\: e_{\set{C}^*}(\vect{s}, i) < Ke^{-n\delta_1}\right] \nonumber\\
&\hspace{30pt}  = 1-\Pr_{\set{C}^*} \left[ \exists\vect{s}\in\chi_n,\exists i \:\:\: e_{\set{C}^*} (\vect{s}, i) \ge Ke^{-n\delta_1} \right] \nonumber\\
&\hspace*{30pt} \stackrel{\text{(a)}}{\ge} 1 - \sum_{\vect{s}\in \chi_n} \sum_{i=1}^{e^{nR}} \Pr_{\set{C}^*} \left[ e_{\set{C}^*} (\vect{s}, i) \ge Ke^{-n\delta_1} \right],
\end{align*}
where (a) follows from the union bound.

Now, note that to bound $|\chi_n|$ one might cover $\mathcal{B}_n(0,\sqrt{n\Lambda})$
by a hypercube of edge size $2\sqrt{n\Lambda}$; 
see Figure~\ref{fig:nu_dense_subset_covering}. So we can write
$|\chi_n| \le \left( \frac{2\sqrt{n\Lambda}}{\varepsilon} \right)^n$.
Then, by using Lemma~\ref{lem:CodeProp_DoublyExpProb_Noisy} we have
\begin{align*}
& \Pr_{\set{C}^*}\left[ \forall\vect{s}\in\chi_n, \forall i \:\:\: e_{\set{C}^*}(\vect{s}, i) < Ke^{-n\delta_1}\right] \nonumber\\
&\hspace{30pt} \ge 1- \left( \frac{2\sqrt{n\Lambda}}{\varepsilon} \right)^n  \times e^{nR}\times \exp\left( -K' e^{n(\delta_0-\delta_1)} \right),
\end{align*}
where, assuming $\delta_0 > \delta_1$, the right 
hand side goes to $1$ as $n$ goes to infinity and
this completes the proof of lemma.
\begin{figure}
\centering
\includegraphics[width=5.5cm]{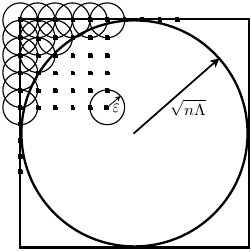}
\caption{This figure shows that how the whole sphere $\mathcal{B}(0,\sqrt{\Lambda})$ 
can be covered by $\varepsilon$-dense subsets $\chi_n$. Here the 
set $\chi_n$ comprises of points from a hyper-cubical lattice.}
\label{fig:nu_dense_subset_covering}
\end{figure}
\end{proof}

\subsection{Converse proof of Theorem~\ref{thm:CapacityAVC-Noisy}}\label{sec:AVC-Noisy-Proof-Converse}
The converse of Theorem~\ref{thm:CapacityAVC-Noisy} follows by 
combining two different upper bounds on the capacity.
The first bound follows by observing that if the
randomness of the stochastic encoder is also shared with the decoder
we can achieve higher rates. So by using result of \cite{NarHugh-IT87}
for randomized codes\footnote{Similar to our work, \cite{NarHugh-IT87}
also considers the maximum probability of error criterion.},
we have the following upper bound on the capacity of an AVC with
stochastic encoder
\[
C \le \frac{1}{2} \log\left(1+\frac{P}{\Lambda+\sigma^2} \right).
\]

Now, it only remains to show that $C=0$ for $P\le\Lambda$ where we
use a similar argument to \cite{BlBrTho-AnMatStat60} (also see \cite{CsisNar-IT91}).
To this end, we show that the adversary can fool the decoder and make it
confused. Because $P\le\Lambda$, the adversary can use a stochastic 
encoder $\Psi'$ with the same probabilistic characteristic of $\Psi$
where we assume that $\Psi$ and $\Psi'$ are independent\footnote{
Such a jamming strategy is equivalent to the notion of {\it symmetrizability} 
condition in the AVC literature (see, for instance~\cite{CsiN:88, CsiN:88a}, and \cite{LapNar-IT98}).}. 
Then for any decoder $\phi$ and for any $i\neq j$ we can write
\begin{align*}
& \Pr \left[ \phi \left( \Psi(i)+\Psi'(j) + \vect{V} \right) \neq i \right]  \\
&\hspace{75pt}	= \Pr \left[ \phi \left( \Psi(j)+\Psi'(i) + \vect{V} \right) \neq i \right] \\
&\hspace{75pt}	= 1-\Pr \left[ \phi \left( \Psi(j)+\Psi'(i) + \vect{V} \right) = i \right] \\
&\hspace{75pt}	\ge 1-\Pr \left[ \phi \left( \Psi(j)+\Psi'(i) + \vect{V} \right) \neq j \right].
\end{align*}
Hence we have
\begin{align*}
\frac{1}{M} \sum_{j=1}^{M} e_{\max}(\Psi'(j)) &\ge \frac{1}{M^2} \sum_{i,j=1}^{M}  e(\Psi'(j), i) \\
&= \frac{1}{M^2} \sum_{i,j=1}^{M} \Pr\left[ \phi \left( \Psi(i)+\Psi'(j) + \vect{V} \right) \neq i \right] \\
&\ge \frac{1}{M^2} \sum_{i,j=1}^{M} \Big[ \Pr\left[ \phi \left( \Psi(i)+\Psi'(j) + \vect{V} \right) \neq i \right] \\
& \quad + \Pr \left[ \phi \left( \Psi(j)+\Psi'(i) + \vect{V} \right) \neq j \right] \Big] \\
&\ge \frac{1}{M^2} \frac{M(M-1)}{2}\\
&\ge \frac{1}{4},
\end{align*}
where $M=e^{nR}$. This shows that
\[
\frac{1}{M} \sum_{j=1}^M \Ex\left[ e_{\max}(\Psi'(j)) \right] \ge \frac{1}{4},
\]
which means there exists at least a $k$ such that $\Ex\left[ e_{\max}(\Psi'(k)) \right]\ge \frac{1}{4}$
and this completes the proof.


\appendix

\begin{fact}\label{fact:ProbUpperBound}
For two events $\mathcal{A}$ and $\mathcal{B}$ we can write
\[
\Pr[\mathcal{A}] = \Pr[ \mathcal{A}\cap(\mathcal{B}\cup\bar{\mathcal{B}}) ]
	\le \Pr[ \mathcal{B} ] + \Pr[ \mathcal{A}\cap \bar{\mathcal{B}} ].
\]
\end{fact}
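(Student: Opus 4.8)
The plan is to reduce the claimed inequality to two textbook properties of probability measures: finite subadditivity (the union bound) and monotonicity. The starting point is the set identity already written in the statement, namely that $\mathcal{B}\cup\bar{\mathcal{B}}$ is the entire sample space, so intersecting it with $\mathcal{A}$ leaves $\mathcal{A}$ unchanged; this justifies the first equality $\Pr[\mathcal{A}]=\Pr[\mathcal{A}\cap(\mathcal{B}\cup\bar{\mathcal{B}})]$.

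Next I would distribute the intersection over the union, writing $\mathcal{A}\cap(\mathcal{B}\cup\bar{\mathcal{B}})=(\mathcal{A}\cap\mathcal{B})\cup(\mathcal{A}\cap\bar{\mathcal{B}})$. Applying finite subadditivity to this two-set union gives $\Pr[\mathcal{A}]\le\Pr[\mathcal{A}\cap\mathcal{B}]+\Pr[\mathcal{A}\cap\bar{\mathcal{B}}]$. Finally, since $\mathcal{A}\cap\mathcal{B}\subseteq\mathcal{B}$, monotonicity of $\Pr$ yields $\Pr[\mathcal{A}\cap\mathcal{B}]\le\Pr[\mathcal{B}]$, and substituting this into the previous line produces exactly $\Pr[\mathcal{A}]\le\Pr[\mathcal{B}]+\Pr[\mathcal{A}\cap\bar{\mathcal{B}}]$.

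There is no real obstacle here: the statement is an elementary fact recorded only so that it can be cited cleanly in the appendix (for instance in the proof of Lemma~\ref{lem:CodeProp_DoublyExpProb_Noisy}, where one repeatedly peels off a ``bad'' event $\mathcal{B}$ from a target event $\mathcal{A}$). The only point worth being slightly careful about is that subadditivity is being used rather than additivity, because $\mathcal{A}\cap\mathcal{B}$ and $\mathcal{A}\cap\bar{\mathcal{B}}$ are in fact disjoint and one could even write equality at that step; I would keep the inequality so that the bound matches the form in which it is invoked. No further assumptions on $\mathcal{A}$ or $\mathcal{B}$ (independence, positivity, etc.) are needed.
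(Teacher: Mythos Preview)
Your argument is correct and matches the paper's treatment: the paper records this as a \emph{Fact} with no separate proof, the one-line chain $\Pr[\mathcal{A}]=\Pr[\mathcal{A}\cap(\mathcal{B}\cup\bar{\mathcal{B}})]\le\Pr[\mathcal{B}]+\Pr[\mathcal{A}\cap\bar{\mathcal{B}}]$ already being the entire justification. Your expansion via distributivity, (sub)additivity, and monotonicity is exactly the intended reading, and your observation that $\mathcal{A}\cap\mathcal{B}$ and $\mathcal{A}\cap\bar{\mathcal{B}}$ are disjoint (so equality holds at that step) is a fine extra remark.
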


Our proof requires the following ``martingale concentration lemma'' 
proven in \cite[Lemma~A1]{CsisNar-IT91}.
\begin{lemma}[{\cite[Lemma~A1]{CsisNar-IT91}}]\label{lem:CsiszarNar-Lem-A1}
Let ${X}_{1},\ldots,{X}_{L}$
be arbitrary r.v.\textquoteright{}s and $f_{i}({X}_{1},\ldots,{X}_{L})$
be arbitrary function with $0\leq f_{i}\leq1$, $i=1,\ldots, L$. Then the
condition
\[
\Ex\left[ f_{i}({X}_{1},\ldots,{X}_{L}) | {X}_{1},\ldots, {X}_{i-1} \right]\leq a \:\:\:\: \mathrm{a.s.},\:\:\:\:\:\: i=1,\ldots, L,
\]
implies that
\[
\Pr \left[ \frac{1}{L}\sum_{i=1}^{L}f_{i}({X}_{1},\ldots, {X}_{i})>\tau \right] \leq \exp\left(-L(\tau\log2-a)\right).
\]
\end{lemma}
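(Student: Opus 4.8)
The plan is to prove this by the standard exponential-moment (Chernoff) method together with a ``peeling'' of the sum one summand at a time via the tower property of conditional expectation; the hypothesis plays exactly the role of a supermartingale-type bound on each newly revealed increment. I read $f_i$ as a function of $X_1,\ldots,X_i$ only (equivalently, a function of all the variables that does not depend on $X_{i+1},\ldots,X_L$), consistently with the way the sum is written in the conclusion, so that each $f_i$ is $\sigma(X_1,\ldots,X_i)$-measurable. Set $S_L \triangleq \sum_{i=1}^L f_i$. For any $s>0$, applying Markov's inequality to the nonnegative random variable $e^{sS_L}$ gives
\[
\Pr\!\left[\frac{1}{L}\sum_{i=1}^L f_i > \tau\right] = \Pr\!\left[e^{sS_L} > e^{sL\tau}\right] \le e^{-sL\tau}\,\Ex\!\left[e^{sS_L}\right],
\]
so it suffices to bound the exponential moment $\Ex[e^{sS_L}]$.

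First I would record the elementary convexity bound $e^{sx}\le 1+(e^s-1)x$ for all $x\in[0,1]$ (the convex function $x\mapsto e^{sx}$ lies below the chord joining its values at $x=0$ and $x=1$ on $[0,1]$). Applying it with $x=f_L\in[0,1]$ and taking the conditional expectation given $X_1,\ldots,X_{L-1}$,
\begin{align*}
\Ex\!\left[e^{sf_L}\,\middle|\,X_1,\ldots,X_{L-1}\right] &\le 1+(e^s-1)\,\Ex\!\left[f_L\,\middle|\,X_1,\ldots,X_{L-1}\right] \\
&\le 1+(e^s-1)a \;\le\; e^{(e^s-1)a},
\end{align*}
where the middle inequality is the hypothesis and the last uses $1+y\le e^y$ with $y=(e^s-1)a\ge 0$. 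Since $f_1,\ldots,f_{L-1}$ are measurable with respect to $X_1,\ldots,X_{L-1}$, the factor $e^{sS_{L-1}}$ pulls out of this conditional expectation, giving
\[
\Ex\!\left[e^{sS_L}\,\middle|\,X_1,\ldots,X_{L-1}\right] \le e^{(e^s-1)a}\,e^{sS_{L-1}}.
\]
Taking expectations and iterating this $L$ times (peeling $f_L, f_{L-1},\ldots,f_1$ in turn) yields $\Ex[e^{sS_L}]\le e^{L(e^s-1)a}$.

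Combining the two displays, for every $s>0$ we obtain
\[
\Pr\!\left[\frac{1}{L}\sum_{i=1}^L f_i > \tau\right] \le \exp\!\Big(L\big[(e^s-1)a - s\tau\big]\Big),
\]
and the final step is just to pick $s=\log 2$: then $e^s-1=1$ and the exponent collapses to $L(a-\tau\log 2) = -L(\tau\log 2 - a)$, which is exactly the claimed bound. (Optimizing over $s>0$ would improve the constant, but $s=\log 2$ already gives precisely the stated inequality.)

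I do not expect a genuine obstacle here --- the computation is a textbook Chernoff/martingale argument. The only point that needs care is the measurability bookkeeping in the peeling step: one uses that each $f_i$ depends only on $X_1,\ldots,X_i$ so that $e^{sS_{i-1}}$ legitimately factors out of the conditional expectation given $X_1,\ldots,X_{i-1}$, and the hypothesis $\Ex[f_i\mid X_1,\ldots,X_{i-1}]\le a$ is then exactly the one-step control that drives the iteration.
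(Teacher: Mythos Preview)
Your argument is correct: the Chernoff bound combined with the convexity estimate $e^{sx}\le 1+(e^s-1)x$ on $[0,1]$, the tower property to peel one increment at a time, and the choice $s=\log 2$ yields exactly the stated inequality. The paper itself does not prove this lemma at all --- it merely quotes it verbatim from \cite[Lemma~A1]{CsisNar-IT91} --- so there is no ``paper's own proof'' to compare against; your write-up supplies precisely the standard proof that the cited reference gives.
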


\begin{lemma}[{\cite[Lemma~2]{CsisNar-IT91}}] \label{lem:BoundDotProduct}
Let the random vector $\vect{U}$ be uniformly
distributed on the $n$-dimensional unit sphere. Then for every vector $\vect{u}$ on
this sphere and any $\frac{1}{\sqrt{2\pi n}}<\alpha<1$, we have 
\[
\Pr\left[ |\langle \vect{U}, \vect{u} \rangle|\geq\alpha \right] \leq 2(1-\alpha^{2})^{\frac{(n-1)}{2}}.
\]
\end{lemma}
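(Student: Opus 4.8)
The plan is to prove Lemma~\ref{lem:BoundDotProduct} by reducing the bound to a one–dimensional tail estimate via the rotational invariance of the uniform law on the sphere, and then reading off the factor $(1-\alpha^2)^{(n-1)/2}$ from a Beta–type integral. First I would fix the unit vector $\vect{u}$ and use that the distribution of $\vect{U}$ is invariant under orthogonal transformations, so that $\langle\vect{U},\vect{u}\rangle$ has the same law as the first coordinate $U_{1}$ of $\vect{U}$, independently of $\vect{u}$. A standard slicing computation shows that $U_{1}$ has density on $[-1,1]$ proportional to $(1-t^{2})^{(n-3)/2}$, with normalizing constant $\int_{-1}^{1}(1-t^{2})^{(n-3)/2}\,dt=B(\tfrac12,\tfrac{n-1}{2})=\sqrt{\pi}\,\Gamma(\tfrac{n-1}{2})/\Gamma(\tfrac n2)$; equivalently, writing $\vect{U}=\vect{G}/\|\vect{G}\|$ for a standard Gaussian $\vect{G}$, one has $\langle\vect{U},\vect{u}\rangle^{2}\sim\mathrm{Beta}(\tfrac12,\tfrac{n-1}{2})$. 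By the symmetry $U_{1}\stackrel{d}{=}-U_{1}$,
\[
\Pr\!\left[\,|\langle\vect{U},\vect{u}\rangle|\ge\alpha\,\right]=\frac{2\int_{\alpha}^{1}(1-t^{2})^{(n-3)/2}\,dt}{\int_{-1}^{1}(1-t^{2})^{(n-3)/2}\,dt}.
\]

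Next I would bound the two integrals separately. For the numerator, on $[\alpha,1]$ we have $t/\alpha\ge1$, hence
\[
\int_{\alpha}^{1}(1-t^{2})^{(n-3)/2}\,dt\le\frac1\alpha\int_{\alpha}^{1}t(1-t^{2})^{(n-3)/2}\,dt=\frac{(1-\alpha^{2})^{(n-1)/2}}{\alpha(n-1)},
\]
which is exactly the step that produces the power $(1-\alpha^{2})^{(n-1)/2}$. For the denominator I would lower bound the Beta constant: log–convexity of $\Gamma$ gives $\Gamma(\tfrac n2)^{2}\le\Gamma(\tfrac{n-1}{2})\Gamma(\tfrac{n+1}{2})=\tfrac{n-1}{2}\,\Gamma(\tfrac{n-1}{2})^{2}$, so $\Gamma(\tfrac{n-1}{2})/\Gamma(\tfrac n2)\ge\sqrt{2/n}$ and $B(\tfrac12,\tfrac{n-1}{2})\ge\sqrt{2\pi/n}$. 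Combining the two estimates yields $\Pr[\,|\langle\vect{U},\vect{u}\rangle|\ge\alpha\,]\le\dfrac{2}{\alpha(n-1)}\sqrt{\tfrac{n}{2\pi}}\,(1-\alpha^{2})^{(n-1)/2}$.

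Finally I would invoke the hypothesis $\alpha>1/\sqrt{2\pi n}$, which bounds the leading factor $\tfrac{2}{\alpha(n-1)}\sqrt{n/(2\pi)}$ by $\tfrac{2n}{n-1}$; this is the claimed constant $2$ up to a factor tending to $1$, and a marginally sharper estimate of the normalizing constant (for instance $B(\tfrac12,\tfrac{n-1}{2})\ge\sqrt{2\pi/(n-\tfrac12)}$, from Kershaw's inequality for $\Gamma$) together with a slightly tighter numerator bound removes the residual slack. The one step that requires real care is this lower bound on the normalizing constant: the constant $2\pi$ appearing in the hypothesis on $\alpha$ is precisely calibrated to the asymptotics $B(\tfrac12,\tfrac{n-1}{2})\sim\sqrt{2\pi/n}$, so a cruder Stirling-type bound would not suffice; every other ingredient is elementary.
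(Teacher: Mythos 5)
This lemma is not proved in the paper at all --- it is imported verbatim from \cite[Lemma~2]{CsisNar-IT91} --- so there is no in-paper argument to compare against; your proposal is the natural direct proof, and most of it is right: rotational invariance reducing $\langle\vect{U},\vect{u}\rangle$ to the first coordinate, the density proportional to $(1-t^2)^{(n-3)/2}$ with normalizing constant $B(\tfrac12,\tfrac{n-1}{2})$, the numerator bound via $t/\alpha\ge 1$, and the lower bound on the Beta constant via log-convexity of $\Gamma$ are all correct. The genuine gap is your last step. After your estimates the prefactor is $\frac{2}{\alpha(n-1)}\sqrt{n/(2\pi)}$, which under $\alpha>1/\sqrt{2\pi n}$ only gives the constant $2n/(n-1)$, and your claim that a sharper lower bound on the normalizing constant (Kershaw) ``removes the residual slack'' does not work: $B(\tfrac12,\tfrac{n-1}{2})$ really is $\sqrt{2\pi/n}\,\bigl(1+O(1/n)\bigr)$, so no refinement of that constant can help. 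Indeed, with your numerator bound the inequality with constant exactly $2$ is equivalent to $\alpha(n-1)B(\tfrac12,\tfrac{n-1}{2})\ge 1$, i.e.\ $\alpha\ge \Gamma(\tfrac n2)\big/\bigl(2\sqrt{\pi}\,\Gamma(\tfrac{n+1}{2})\bigr)$, and this threshold is strictly larger than $1/\sqrt{2\pi n}$; a window of admissible $\alpha$ is left uncovered no matter how precisely the Beta constant is estimated.

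The fix is small and worth writing down. Your own log-convexity step actually yields the slightly stronger $\Gamma(\tfrac{n-1}{2})/\Gamma(\tfrac n2)\ge\sqrt{2/(n-1)}$, hence $B(\tfrac12,\tfrac{n-1}{2})\ge\sqrt{2\pi/(n-1)}$, so for every $\alpha\ge 1/\sqrt{2\pi(n-1)}$ your chain closes with constant exactly $2$, since the prefactor becomes $\frac{2}{\alpha\sqrt{2\pi(n-1)}}\le 2$. In the leftover window $1/\sqrt{2\pi n}<\alpha<1/\sqrt{2\pi(n-1)}$ the claimed bound is trivial: $\alpha^2\le\frac{1}{2\pi(n-1)}$ gives $(1-\alpha^2)^{(n-1)/2}\ge 1-\tfrac{n-1}{2}\alpha^2\ge 1-\tfrac{1}{4\pi}>\tfrac12$ (for $n\ge 3$; $n=2$ is checked directly), so the right-hand side already exceeds $1$. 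With these two observations your argument is a complete proof --- in fact it shows the stated bound for all $0<\alpha<1$. Alternatively, the slack could be removed by sharpening the numerator estimate rather than the denominator, since for $\alpha=\Theta(1/\sqrt n)$ the lossy step is $\int_\alpha^1(1-t^2)^{(n-3)/2}\,dt\le\frac1\alpha\int_\alpha^1 t\,(1-t^2)^{(n-3)/2}\,dt$, not the Beta constant.
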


\begin{proof}[Proof of Lemma~\ref{lem:CodeProp_DoublyExpProb_Noisy}]
For notational convenience let us normalize all vectors
$\vect{s}$, $\vect{V}$, and $\vect{X}(i,t)$ by $1/\sqrt{n}$ in this proof.

To derive the doubly exponential bound stated in the lemma,
we use Lemma~\ref{lem:CsiszarNar-Lem-A1}.
To this end let us define the functions $f_t$ for $1\le t\le e^{n\delta_0}$
as follows
\begin{align*}
& f_{t} \left( \vect{X}(i,1),\ldots,\vect{X}(i,t) \right) \\
&\hspace{15pt} \triangleq {\Pr}_{V} \Big[ \langle \vect{X}(j,t'),\vect{X}(i,t)+ \vect{s} +\vect{V} \rangle \\
&\hspace{24pt}  \geq P+\langle \vect{X}(i,t),\vect{s} + \vect{V} \rangle \: \text{for some $j\neq i$ and $t'$} \Big].
\end{align*}
Now, by using the functions $f_t$, the probability 
expression in the statement of lemma can be written as follows
\begin{align}\label{eq:LemDublyExp_Noisy_eq0}
&\Pr_{\set{C}^*} \bigg[ \Pr_T \Pr_{V} \Big[ \langle \vect{X}(j,t'),\vect{X}(i,T) +\vect{s}+\vect{V}\rangle \geq P \nonumber\\
&\quad\quad +\langle \vect{X}(i,T), \vect{s} +\vect{V} \rangle \; \mathrm{for}\; \mathrm{some}\; \mathrm{j}\neq \mathrm{i}\; \mathrm{and}\; t' \Big] \geq K e^{-n\delta_{1}} \bigg] \nonumber\\
&= \Pr_{\set{C}^*} \bigg[ \frac{1}{e^{n\delta_0}} \sum_{t=1}^{e^{n\delta_0}} \Pr_{V} \Big[ \langle \vect{X}(j,t'),\vect{X}(i,t) +\vect{s}+\vect{V}\rangle \geq P \nonumber\\
&\quad\quad +\langle \vect{X}(i,t), \vect{s} +\vect{V} \rangle \; \mathrm{for}\; \mathrm{some}\; \mathrm{j}\neq \mathrm{i}\; \mathrm{and}\; t'  \Big] \geq K e^{-n\delta_{1}} \bigg] \nonumber\\
&= \Pr_{\set{C}^*} \bigg[ \frac{1}{e^{n\delta_0}} \sum_{t=1}^{e^{n\delta_0}} f_t\left(\vect{X}(i,1),\ldots,\vect{X}(i,t) \right) \ge K e^{-n\delta_{1}} \bigg].
\end{align}
In order to bound \eqref{eq:LemDublyExp_Noisy_eq0} we use
Lemma~\ref{lem:CsiszarNar-Lem-A1}. To this end, we have to
bound the expected values of the functions $f_t$. So we 
proceed as follows
\begin{align}\label{eq:LemDublyExp_Noisy_eq1}
&\hspace{-5pt} \Ex_{\set{C}^*}  \left[ f_{t}( \vect{X}(i,1),\ldots,\vect{X}(i,t) ) | \vect{X}(i,1),\ldots,\vect{X}(i,t-1)  \right] \nonumber\\
&= \Ex_{\set{C}^*} \bigg[ \Pr_{V} \Big[ \langle \vect{X}(j,t'),\vect{X}(i,t) + \vect{s} + \vect{V}\rangle \nonumber\\
&\quad\geq P+\langle \vect{X}(i,t), \vect{s}\rangle +\langle \vect{X}(i,t), \vect{V}\rangle \nonumber\\
&\quad \: \text{for some $j\neq i$ and $t'$} \Big] \bigg| \vect{X}(i,1),\ldots,\vect{X}(i,t-1) \bigg] \nonumber\\
&\stackrel{\text{(a)}}{=} \Pr _{V} \bigg[ \Pr_{\set{C}^*} \Big[ \bigcup_{(j,t'):\: j\neq i} \big\{ \langle \vect{X}(j,t'),\vect{X}(i,t)+\vect{s}+\vect{V}\rangle  \nonumber\\
&\quad \geq P+\langle \vect{X}(i,t),\vect{s}+\vect{V}\rangle \big\} \Big] \bigg] \nonumber\\
&\stackrel{\text{(b)}}{\leq}  {\Pr}_{V}  \Pr_{\set{C}^*} \big[ \langle \vect{X}(i,t),\vect{s}+\vect{V} \rangle\leq -\delta_{2}  \big] \nonumber\\
&\quad + \Pr_{V} \bigg[ \Pr_{\set{C}^*} \Big[ \bigcup_{(j,t'): \: j\neq i} \big\{ \langle \vect{X}(j,t'),\vect{X}(i,t)+\vect{s}+\vect{V}\rangle \nonumber\\
&\quad \geq P +\langle \vect{X}(i,t),\vect{s}+\vect{V}\rangle \big\}, \langle \vect{X}(i,t),\vect{s}+\vect{V}\rangle > -\delta_{2} \Big] \bigg],
\end{align}
where (a) follows because $\vect{X}(i,t)$ are independent random 
variables so the conditioning can be removed and also
using the fact that for an event $\mathcal{A}$ we have 
$\Ex_{\set{C}^*}\Pr_V[\mathcal{A}]=\Pr_{\set{C}^*}\Pr_V[\mathcal{A}]$ and
(b) follows from Fact~\ref{fact:ProbUpperBound}.

Now, for $\delta_2>0$, by using Fact~\ref{fact:ProbUpperBound} we can bound
the first term of \eqref{eq:LemDublyExp_Noisy_eq1} as follows
\begin{align}\label{eq:LemDublyExp_Noisy_eq2}
&\Pr_{V}  \Pr_{\set{C}^*} \big[ \langle \vect{X}(i,t),\vect{s}+\vect{V}\rangle\leq{-\delta_{2}} \big] \nonumber\\
&\leq \Pr_{V} \big[ \| \vect{s}+\vect{V} \|^2 \geq \| \vect{s} \|^2 +\sigma^2+\delta_{2} \big] \nonumber\\
&\quad + \Pr_{V} \Pr_{\set{C}^*} \big[ \langle \vect{X}(i,t),\vect{s}+\vect{V}\rangle \le -\delta_{2}, \nonumber\\
&\hspace{60pt} \| \vect{s}+\vect{V} \|^2 < \| \vect{s} \|^2 +\sigma^2+\delta_{2} \big] \nonumber\\
&\leq \Pr_{V} \big[ \| \vect{s}+\vect{V} \|^2 \geq \| \vect{s} \|^2 +\sigma^2+\delta_{2} \big] \nonumber\\
&\quad + \Pr_{V} \Pr_{\set{C}^*} \big[ |\langle \vect{X}(i,t),\vect{s}+\vect{V}\rangle| \geq \delta_{2}, \nonumber\\
&\hspace{60pt} \| \vect{s}+\vect{V} \|^2 < \| \vect{s} \|^2 +\sigma^2+\delta_{2} \big].
\end{align}
First note that
$\| \vect{s}+\vect{V} \|^{2} ={\| \vect{s} \|}^2+{ \|\vect{V} \|}^2+2\langle\vect{s},\vect{V}\rangle$.
Then, since $\vect{V}=(\vect{V}_{1},\ldots,\vect{V}_{n})$ is a sequence of i.i.d. Gaussian random
variables $N(0,\frac{\sigma^{2}}{n})$, the first term of
\eqref{eq:LemDublyExp_Noisy_eq2} can be bounded as follows
\begin{align}\label{eq:LemDublyExp_Noisy_eq3}
\Pr_V \big[ & \|\vect{s}+\vect{V} \|^{2}  >\|\vect{s}\|^2 + \sigma^{2}+\delta_{2} \big] \nonumber\\
&=\Pr \big[ \|\vect{V} \|^{2}+2\langle\vect{s},\vect{V}\rangle > \sigma^{2}+\delta_{2} \big] \nonumber\\
&\stackrel{\text{(a)}}{\leq} \Pr [ \langle\vect{s},\vect{V}\rangle\geq \eta ]  +\Pr [ \|\vect{V}\|^{2}+2\eta > \sigma^{2}+\delta_{2} ] \nonumber\\
&\stackrel{\text{(b)}}{=}\Pr \left[ \langle\vect{u},\vect{V}\rangle\geq \frac{\eta}{ \| \vect{s} \|} \right] +\Pr \big[ \|\vect{V}\|^2 > \sigma^{2}+\delta_{2} - 2\eta \big],
\end{align}
where (a) follows from Fact~\ref{fact:ProbUpperBound} for $\eta>0$ and in (b)
we define $\vect{u}=\vect{s}/\| \vect{s} \|$.
Because  $\vect{u}$ is a unitary vector 
it is straightforward to show that $\langle \vect{u},\vect{V} \rangle \sim N(0,\frac{\sigma^2}{n})$.
Hence the first term in \eqref{eq:LemDublyExp_Noisy_eq3} can be
bounded as follows
\begin{equation}\label{eq:LemDublyExp_Noisy_eq4}
\Pr_{V} \left[ \langle \vect{s},\vect{V} \rangle \ge \eta \right] =Q\left (\frac{\sqrt{n}\eta}{\sigma \|\vect{s}\| } \right)
	\le \frac{1}{2} \exp{ \left( - \frac{\eta^2 n}{2\sigma^2 \Lambda}  \right)},
\end{equation}
where in the above equation we have used the approximation $Q(y)\leq\frac{1}{2}e^{-\frac{y^{2}}{2}}$.
In order to bound the second term in \eqref{eq:LemDublyExp_Noisy_eq3}
note that $\frac{n}{\sigma^2} \| \vect{V} \|^2$ has the Chi-squared distribution with $n$ degree of freedom.
Then by using \cite[Lemma~1]{LaurentMassart-AnnStat00} we can bound 
the second term of \eqref{eq:LemDublyExp_Noisy_eq3} as follows
\begin{align}\label{eq:LemDublyExp_Noisy_eq5}
&\Pr_V \left[ \frac{n}{\sigma^2} \| \vect{V} \|^{2}> \left( 1+\frac{\delta_{2}-2\eta}{\sigma^2} \right) n \right] \nonumber\\
&\hspace{20pt} \le \exp\left( -\frac{1}{2}\left[ 1+\frac{\delta_2-2\eta}{\sigma^2}-\sqrt{1+2\frac{\delta_2-2\eta}{\sigma^2}} \right] n \right) \nonumber\\
&\hspace{20pt} = \exp(-\xi n),
\end{align}
where $\xi=\frac{1}{2}\left[ 1+\frac{\delta_2-2\eta}{\sigma^2}-\sqrt{1+2\frac{\delta_2-2\eta}{\sigma^2}} \right]$ is a positive quantity if $\delta_2>2\eta$.

\begin{remark}\label{rmrk:LemDublyExp_Noisy_Bound_xi}
Note that because $\sqrt{1+2x}\leq 1+x-\frac{x^2}{8}$ for 
all $x\in[0,1]$ then by choosing $x=\frac{\delta_2-2\eta}{\sigma^2}$ 
we have $\xi\geq \frac{1}{16}{(\frac{\delta_2-2\eta}{\sigma^2})}^{2}$ 
and $\exp(-n\xi)\leq \exp(-\frac{1}{16}{(\frac{\delta_2-2\eta}{\sigma^2})}^{2})$. 
\end{remark}

Now it remains to bound the second term of \eqref{eq:LemDublyExp_Noisy_eq2}. 
To this end let us write
\begin{align*}
&{\Pr}_{V}{\Pr}_{\set{C}^*} \left[ |\langle\vect{X}(i,t),\vect{s}+\vect{V}\rangle| \geq \delta_{2},\|\vect{s}+\vect{V} \|^2 < \| \vect{s} \|^2 + \sigma^2 + \delta_{2} \right] \\
& =\int_{0}^{\| \vect{s}\|^2 + \sigma^{2}+\delta_{2}}{\Pr}_{V}{\Pr}_{\set{C}^*} \Big[ |\langle \vect{X}(i,t),\vect{s}+\vect{V}\rangle| > \delta_{2} \: \Big|   \\
&\hspace{160pt}  \|\vect{s}+\vect{V}\|^{2}=r \Big] dF(r) 
\end{align*}
where $F(r)=\Pr \left[ \|\vect{s}+\vect{V}\|^{2} \leq r \right]$.
Then we can write
\begin{align*}
&{\Pr}_{V}{\Pr}_{\set{C}^*} \left[ |\langle\vect{X}(i,t),\vect{s}+\vect{V}\rangle| \geq \delta_{2},\|\vect{s}+\vect{V} \|^2 < \| \vect{s} \|^2 + \sigma^2 + \delta_{2} \right] \\
&=\int_{0}^{\| \vect{s} \|^2 + \sigma^{2} +\delta_{3}} \Pr _{V} \Pr _{\set{C}^*} \Big[ \langle \vect{X}(i,t),\frac{\vect{s}+\vect{V}}{\|\vect{s}+\vect{V}\|}\rangle> \frac{\delta_{2}}{\sqrt{r}} \Big| \\
&\hspace{160pt} \| \vect{s} + \vect{V} \|^{2} = r \Big] dF(r)\\
& \stackrel{\text{(a)}}{\leq} {\Pr}_{U}{\Pr}_{\Breve{\set{C}}^*} \left[ \langle \Breve{\vect{X}}(i,t), \vect{U} \rangle > \frac{\delta_{2}/\sqrt{P}}{ \sqrt{ \|\vect{s}\|^2 + \sigma^{2}+\delta_{2}}} \right]
\end{align*}
where $\vect{U}=\frac{\vect{s}+\vect{V}}{\|\vect{s}+\vect{V}\|}$,
$\Breve{\vect{X}}(i,t)=\frac{\vect{X}(i,t)}{\|\vect{X}(i,t)\|}$, 
and (a) is true because evaluating the term inside
the integration for the point $r=\| \vect{s} \|^2 + \sigma^{2} +\delta_{2}$
can only increase the probability term. Next,
it follows that
\begin{align}\label{eq:LemDublyExp_Noisy_eq6}
&\Pr_{V}{\Pr}_{\set{C}^*} \left[ |\langle\vect{X}(i,t),\vect{s}+\vect{V}\rangle| \geq \delta_{2},\|\vect{s}+\vect{V} \|^2 < \| \vect{s} \|^2 + \sigma^2 + \delta_{2} \right] \nonumber\\
&=\int \Pr_{\Breve{\set{C}}^*} \left[\langle \Breve{\vect{X}}(i,t),\vect{u}\rangle > \frac{\delta_{2}/\sqrt{P}}{\sqrt{ \|\vect{s}\|^2 + \sigma^{2}+\delta_{2}}} \Big| \vect{U}=\vect{u} \right] f_{\vect{U}}(\vect{u})d\vect{u} \nonumber\\
%
&\stackrel{\text{(a)}}{\leq} \int 2\left(1-\frac{{{\delta}_{2}}^{2} /P}{\|\vect{s}\|^2+\sigma^{2}+{\delta}_{2}} \right)^{\frac{n-1}{2}}f_{\vect{U}}(\vect{u})d\vect{u} \nonumber\\
&= 2\left(1-\frac{{{\delta}_{2}}^{2}/P}{\|\vect{s}\|^2+\sigma^{2}+{\delta}_{2}} \right)^{\frac{n-1}{2}} \nonumber\\
&\stackrel{\text{(b)}}{\leq} 2\exp\left(-\frac{n-1}{2}\frac{{{\delta}_{2}}^{2} /P}{\|\vect{s}\|^2+\sigma^{2}+{\delta}_{2}} \right)
\end{align}
where (a) follows from Lemma~\ref{lem:BoundDotProduct} and
(b) follows from the inequality $1-x\le e^{-x}$ for $0<x<1$.
Finally, by combining \eqref{eq:LemDublyExp_Noisy_eq2}, \eqref{eq:LemDublyExp_Noisy_eq3},
\eqref{eq:LemDublyExp_Noisy_eq4}, \eqref{eq:LemDublyExp_Noisy_eq5}, and
\eqref{eq:LemDublyExp_Noisy_eq6} we can bound the 
first term in \eqref{eq:LemDublyExp_Noisy_eq1} as follows
\begin{align}\label{eq:LemDublyExp_Noisy_Expct_fk_Bound_1}
&\Pr_{V} \Pr_{\set{C}^*} \left[ \langle \vect{X}(i,t),\vect{s}+\vect{V}\rangle\leq{-\delta_{2}} \right] \leq \nonumber\\
&\quad 2\exp\left(-\frac{n-1}{2}\frac{{{\delta}_{2}}^{2}/P}{\|\vect{s}\|^2+\sigma^{2}+{\delta}_{2}} \right)+e^{-n\xi}+\frac{1}{2}e^{-\frac{n{\eta}^2}{2{\sigma}^2\Lambda}}. 
\end{align}

Now we bound the second term in \eqref{eq:LemDublyExp_Noisy_eq1} as follows.
Suppose $\mathcal{A}$ denotes for the event $\{ \langle \vect{X}(i,t),\vect{s}+\vect{V}\rangle > -\delta_{2} \}$
and let $\phi=\langle \vect{X}(i,t),\vect{s}+\vect{V}\rangle$.
Then for the second term of \eqref{eq:LemDublyExp_Noisy_eq1},
we note that
\begin{align}
&\Pr_{V\set{C}^*} \bigg[ \bigcup_{(j,t'): \: j\neq i} \big\{ \langle \vect{X}(j,t'),\vect{X}(i,t)+\vect{s}+\vect{V}\rangle  \geq P + \phi \big\}, \mathcal{A}  \bigg] \nonumber\\
&\stackrel{(a)}{\leq}  \Pr_{V} \left[ \|\vect{s}+\vect{V}\|^2 \ge \|\vect{s}\|^2 + \sigma^2+\delta_2 \right] \nonumber\\
&+ \Pr_{V\set{C}^*} \bigg[ \bigcup_{\begin{subarray}{c} (j,t'): \\ j\neq i \end{subarray}} \big\{ \langle \vect{X}(j,t'),\vect{X}(i,t)+\vect{s}+\vect{V}\rangle  \geq P + \phi \big\}, \mathcal{A}, \mathcal{B}  \bigg] \nonumber\\
&\stackrel{(b)}{\leq}  e^{-n\xi} + \frac{1}{2}e^{-\frac{n\eta^2}{2\sigma^2\Lambda}} \nonumber\\
& + \sum_{(j,t'): \: j\neq i } \Pr_{V\set{C}^*} \left[\langle \vect{X}(j,t'),\vect{X}(i,t)+\vect{s}+\vect{V}\rangle \geq P + \phi, \mathcal{A}, \mathcal{B} \right], \nonumber
\end{align}
where (a) follows from Fact~\ref{fact:ProbUpperBound} and
we use $\mathcal{B}$ to denote the event $\{ \|\vect{s}+\vect{V}\|^2 < \|\vect{s}\|^2 + \sigma^2+\delta_2 \}$.
The first two terms in (b) follow from \eqref{eq:LemDublyExp_Noisy_eq3},
\eqref{eq:LemDublyExp_Noisy_eq4}, and \eqref{eq:LemDublyExp_Noisy_eq5}
while the third term is a result of the union bound.
Let us define the unit vectors $\breve{\vect{X}}(j,t')=\frac{\vect{X}(j,t')}{\|\vect{X}(j,t')\|}$
and $\vect{U}=\frac{\vect{X}(i,t)+\vect{s}+\vect{V}}{ \|\vect{X}(i,t)+\vect{s}+\vect{V}\| }$.
Then we note that
\begin{align}
&\Pr_{V\set{C}^*} \bigg[ \bigcup_{(j,t'): \: j\neq i} \big\{ \langle \vect{X}(j,t'),\vect{X}(i,t)+\vect{s}+\vect{V}\rangle  \geq P + \phi \big\}, \mathcal{A}  \bigg] \nonumber\\
&\stackrel{\text{(a)}}{\le} e^{-n\xi} + \frac{1}{2}e^{-\frac{n\eta^2}{2\sigma^2\Lambda}} \nonumber\\
& \quad + \sum_{(j,t'): \: j\neq i } \Pr_{U\Breve{\set{C}}^*} \bigg[ \langle \breve{\vect{X}}(j,t'), \vect{U} \rangle \nonumber\\
&\hspace{85pt} \ge \frac{P+\phi}{\sqrt{P} \sqrt{P+\|\vect{s}+\vect{V}\|^2 +2\phi} } \Big| \mathcal{A}, \mathcal{B} \bigg] \nonumber\\
&\stackrel{\text{(b)}}{\le} e^{-n\xi} + \frac{1}{2}e^{-\frac{n\eta^2}{2\sigma^2\Lambda}} \nonumber\\
& \quad + \sum_{(j,t'): \: j\neq i } \Pr_{U\Breve{\set{C}}^*} \bigg[ \langle \breve{\vect{X}}(j,t'), \vect{U} \rangle \nonumber\\
&\hspace{85pt} \ge \frac{P-\delta_2}{\sqrt{P} \sqrt{P+\Lambda +\sigma^2+\delta_2 -2\delta_2} }  \bigg] \nonumber
\end{align}
where in (a) we use the fact that 
$\Pr[\mathcal{E},\mathcal{A},\mathcal{B}] \le \Pr[\mathcal{E}|\mathcal{A},\mathcal{B}]$
and (b) follows because by substituting $\|\vect{s}+\vect{V}\|^2=\Lambda+\sigma^2+\delta_2$
and $\phi=-\delta_2$ the probability term in front of 
the summation in (a) can only increase; this implies that we can 
remove the conditioning with respect to events $\mathcal{A}$ 
and $\mathcal{B}$.
Now, by applying Lemma~\ref{lem:BoundDotProduct}, we can
further bound the second term of \eqref{eq:LemDublyExp_Noisy_eq1}
as follows
\begin{align}\label{eq:LemDublyExp_Noisy_Expct_fk_Bound_2}
&\Pr_{V\set{C}^*} \bigg[ \bigcup_{(j,t'): \: j\neq i} \big\{ \langle \vect{X}(j,t'),\vect{X}(i,t)+\vect{s}+\vect{V}\rangle  \geq P + \phi \big\}, \mathcal{A}  \bigg] \nonumber\\
& \le e^{-n\xi} + \frac{1}{2}e^{-\frac{n\eta^2}{2\sigma^2\Lambda}} \nonumber\\
&\quad + 2e^{n(R+\delta_0)} \left(1-\frac{P-\delta'_2}{P+\Lambda +\sigma^2 -\delta_2} \right)^{\frac{n-1}{2}} \nonumber\\
& \le e^{-n\xi} + \frac{1}{2}e^{-\frac{n\eta^2}{2\sigma^2\Lambda}} \nonumber\\
&\quad + 2e^{n(R+\delta_0) + \frac{n-1}{2} \log{ \left(1-\frac{P-\delta'_2}{P+\Lambda +\sigma^2 -\delta_2} \right) }},
\end{align}
where $\delta'_2=2\sqrt{P}\delta_2-\delta_2^2$.

Finally, by combining \eqref{eq:LemDublyExp_Noisy_Expct_fk_Bound_1}
and \eqref{eq:LemDublyExp_Noisy_Expct_fk_Bound_2} we can write
the following bound for the expectation of functions $f_t$
\begin{align*}
&\Ex_{\set{C}^*} \left[ f_{t}( \vect{X}(i,1),\ldots,\vect{X}(i,t) ) | \vect{X}(i,1),\ldots,\vect{X}(i,t-1)  \right] \nonumber\\
&\le 2\exp\left(-\frac{n-1}{2}\frac{{{\delta}_{2}}^{2}/P}{\|\vect{s}\|^2+\sigma^{2}+{\delta}_{2}} \right)+ 2e^{-n\xi} + e^{-\frac{n{\eta}^2}{2{\sigma}^2\Lambda}} \nonumber\\
& \quad+ 2e^{n(R+\delta_0) + \frac{n-1}{2} \log{ \left(1-\frac{P-\delta'_2}{P+\Lambda +\sigma^2 -\delta_2} \right) }}.
\end{align*}
By making some more assumptions on $\delta_0$, $\delta_2$,
$\eta$, $R$, and introducing $\delta_1$, 
we can simplify the upper bounds on the expected values of
functions $f_t$ as follows
\begin{align*}
&\Ex_{\set{C}^*} \left[ f_{t}( \vect{X}(i,1),\ldots,\vect{X}(i,t) ) | \vect{X}(i,1),\ldots,\vect{X}(i,t-1)  \right] \nonumber\\
&\stackrel{\text{(a)}}{\leq} 2\exp{\left(-n\frac{1}{16}{(\frac{\delta_{2}-2\eta}{\sigma^2})}^2 \right)} +\exp{\left(-\frac{n\eta^2}{2\sigma^2\Lambda}\right)}\\
&\quad +2\exp{\left(-(\frac{n-1}{2})\frac{\delta_{2}^2/P}{2{({\|\vect{s}\|}^2+\sigma ^2)}}\right)} +2\exp{(-n\delta_{1})}\\
&\stackrel{\text{(b)}}{\leq} 2\exp{(-n\delta_{1})} + \exp{(-n\delta_1)} + 2\exp{(-n\delta_1)} +2\exp{(-n\delta_{1})}\\
&\le 10\exp{(-n\delta_1)}
\end{align*}
where (a) follows by Remark~\ref{rmrk:LemDublyExp_Noisy_Bound_xi}, 
assuming $\delta_{2}\leq {\|\vect{s}\|}^2+\sigma^2$, 
and choosing
\[
R < \frac{1-1/n}{2}\log\left(1+\frac{P-\delta'_2}{\Lambda+\sigma^{2}-\delta_2+\delta'_2}\right)-\delta_0-\delta_{1},
\]
(b) follows by assuming the conditions $\delta_{2}>2\eta+4\sigma^2\sqrt{\delta_{1}}$, $\eta>\sqrt{2\Lambda\sigma^2 \delta_1}$, and
$\delta_2>\sqrt{\frac{4P(\Lambda+\sigma^2)\delta_1}{1-1/n}}$.

Then by applying Lemma~\ref{lem:CsiszarNar-Lem-A1} and 
choosing $a=10 e^{-n\delta_1}$ and $\tau=Ke^{-n\delta_1}$ 
we have
\begin{align*}
& \Pr_{\set{C}^*} \bigg[ \frac{1}{e^{n\delta_0}} \sum_{t=1}^{e^{n\delta_0}} \Pr_{V} \big[ \langle \vect{X}(j,t'),\vect{X}(i,t) +\vect{s}+\vect{V}\rangle \geq P \nonumber\\
&\quad\quad +\langle \vect{X}(i,t), \vect{s} +\vect{V} \rangle \; \mathrm{for}\; \mathrm{some}\; \mathrm{j}\neq \mathrm{i}\; \mathrm{and}\; t' \big] \Big] \geq K e^{-n\delta_{1}} \bigg] \nonumber\\
&\le \exp\left(-\exp(n\delta_0) \Big( K\log{2}\exp(-n\delta_1)-10\exp(-n\delta_1) \Big) \right) \nonumber\\
&= \exp\Big( -(K\log{2}-10) \exp(n(\delta_0-\delta_1)) \Big).
\end{align*}
By assuming $\delta_0>\delta_1>0$ we obtain the desired
doubly exponential bound, hence we are done.
\end{proof}

{\scriptsize{
\bibliographystyle{IEEEtran}
\bibliography{gauss_obl}
}}

%
%
%
%

\end{document}